\documentclass{article}

\usepackage{microtype}
\usepackage{graphicx}
\usepackage{subcaption}
\usepackage{booktabs} 
\usepackage{booktabs}
\usepackage{graphicx}
\usepackage[table]{xcolor} 
\usepackage{booktabs}
\usepackage{multirow}
\usepackage{hyperref}

\usepackage[preprint]{icml2026}
\usepackage{amsmath}
\usepackage{amssymb}
\usepackage{mathtools}
\usepackage{amsthm}
\usepackage[capitalize,noabbrev]{cleveref}
\theoremstyle{plain}
\newtheorem{theorem}{Theorem}[section]

\newtheorem{lemma}[theorem]{Lemma}
\newtheorem{corollary}[theorem]{Corollary}
\theoremstyle{definition}
\newtheorem{definition}[theorem]{Definition}
\newtheorem{assumption}[theorem]{Assumption}
\theoremstyle{remark}

\usepackage[textsize=tiny]{todonotes}
\icmltitlerunning{}
\usepackage{amsmath,amssymb,amsthm}
\usepackage{textcomp}
\makeatletter
\usepackage{wasysym} 
\DeclareUnicodeCharacter{266B}{\quarternote}
\@ifundefined{theorem}{\newtheorem{theorem}{Theorem}}{}
\@ifundefined{lemma}{\newtheorem{lemma}{Lemma}}{}
\@ifundefined{corollary}{\newtheorem{corollary}{Corollary}}{}
\@ifundefined{definition}{\newtheorem{definition}{Definition}}{}
\@ifundefined{assumption}{\newtheorem{assumption}{Assumption}}{}
\makeatother

\begin{document}
\icmltitlerunning{AtomicRAG: Atom--Entity Graphs for Retrieval-Augmented Generation}
\twocolumn[
  \icmltitle{AtomicRAG: Atom--Entity Graphs for Retrieval-Augmented Generation}
  \icmlsetsymbol{equal}{*}
    \icmlsetsymbol{corresponding}{\texorpdfstring{\textdagger}{dagger}}
  \begin{icmlauthorlist}
    \icmlauthor{Yanning Hou}{nudt,equal}
    \icmlauthor{Duanyang Yuan}{nudt,equal}
    \icmlauthor{Sihang Zhou}{nudt,corresponding}
    \icmlauthor{Xiaoshu Chen}{nudt}
    \icmlauthor{Ke Liang}{nudt}
    \icmlauthor{Siwei Wang}{nudt}
    \icmlauthor{Xinwang Liu}{nudt}
    \icmlauthor{Jian Huang}{nudt}
  \end{icmlauthorlist}

  \icmlaffiliation{nudt}{National University of Defense Technology, China}

  \icmlcorrespondingauthor{Sihang Zhou}{sihangjoe@gmail.com}

  \icmlkeywords{Machine Learning, Retrieval-Augmented Generation}

  \vskip 0.3in]

\printAffiliationsAndNotice{\icmlEqualContribution}




\begin{abstract}
Recent GraphRAG methods integrate graph structures into text indexing and retrieval, using knowledge graph triples to connect text chunks, thereby improving retrieval coverage and precision. However, we observe that treating text chunks as the basic unit of knowledge representation rigidly groups multiple atomic facts together, limiting the flexibility and adaptability needed to support diverse retrieval scenarios. Additionally, triple-based entity linking is sensitive to relation-extraction errors, which can lead to missing or incorrect reasoning paths and ultimately hurt retrieval accuracy. To address these issues, we propose the Atom-Entity Graph, a more precise and reliable architecture for knowledge representation and indexing. In our approach, knowledge is stored as knowledge atoms, namely individual, self-contained units of factual information, rather than coarse-grained text chunks. This allows knowledge elements to be flexibly reassembled without mutual interference, thereby enabling seamless alignment with diverse query perspectives. Edges between entities simply indicate whether a relationship exists. By combining personalized PageRank with relevance-based filtering, we maintain accurate entity connections and improve the reliability of reasoning. Theoretical analysis and experiments on five public benchmarks show that the proposed AtomicRAG algorithm outperforms strong RAG baselines in retrieval accuracy and reasoning robustness. Code: \href{https://github.com/7HHHHH/AtomicRAG}{https://github.com/7HHHHH/AtomicRAG}.
\end{abstract}

\section{Introduction}
\begin{figure*}[!htb]
    \centering
    \includegraphics[width=0.9\textwidth]{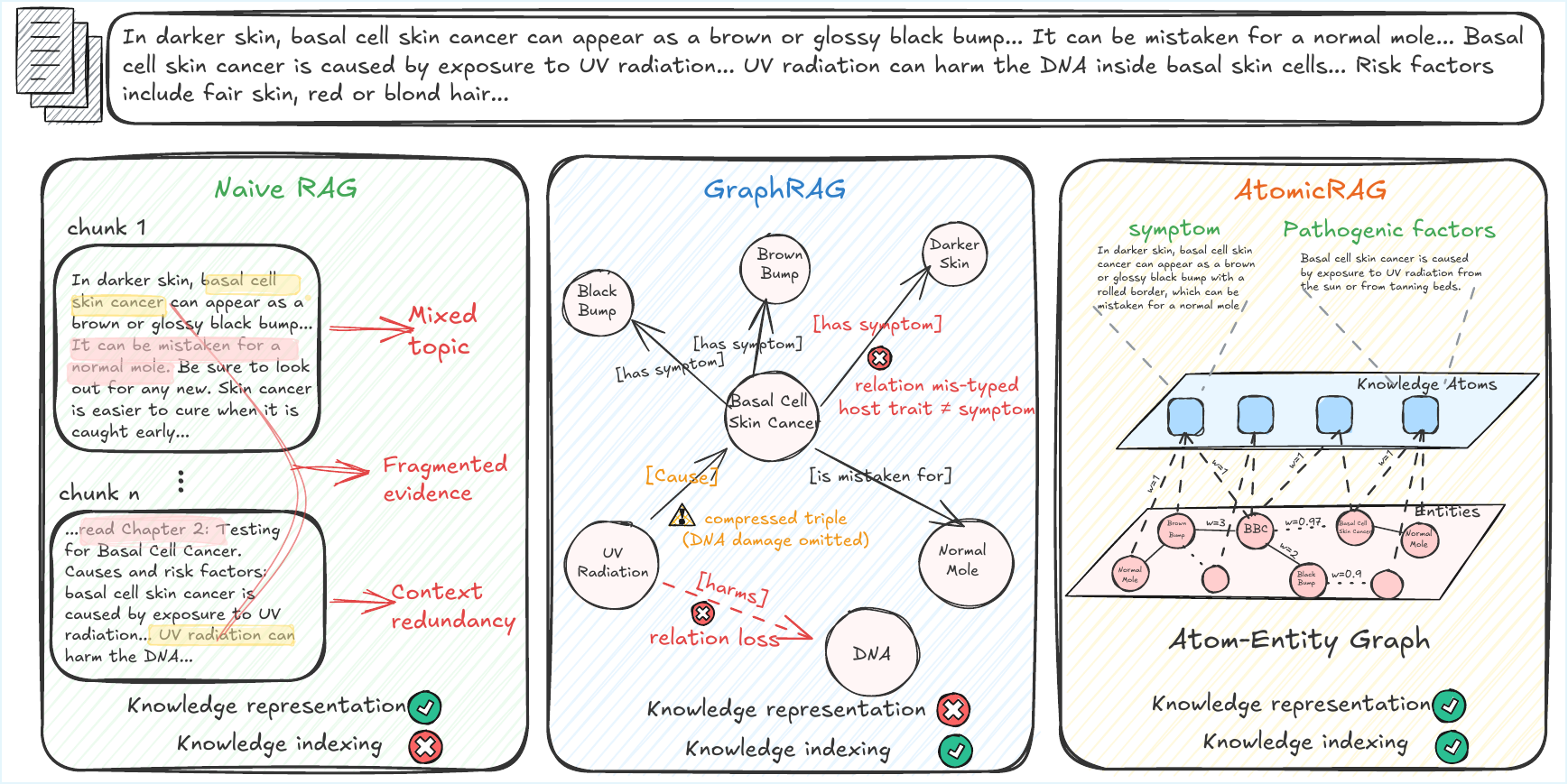}
    \caption{Comparison of knowledge representation and indexing for three classes of methods. Native RAG uses coarse text chunks as basic storage units and indexes them via semantic similarity. GraphRAG organizes knowledge with triples or chunk-level nodes, building connections through relation edges to facilitate global indexing. The proposed Atom--Entity Graph instead represents the corpus with fine-grained knowledge atoms, connects entities via co-occurrence relationships, and yields more stable and accurate connections between knowledge pieces.}

    \label{fig:m1}
\end{figure*}

Retrieval-augmented generation (RAG)~\cite{RAG2,RAG1} has become a standard paradigm to connect large language models (LLMs)~\cite{DS-R1,Qwen} to external corpora for knowledge-intensive tasks, improving factual grounding and answer accuracy. Classic RAG pipelines~\cite{chunk-RAG1,chunk-RAG2} rely on chunk-based retrieval: documents are split into fixed-length text blocks, embedded, and retrieved using a dense similarity search. This simple and efficient design largely preserves the original semantics, but treats knowledge as isolated fragments. It ignores inner relations among chunks and often introduces redundant context, which makes it brittle on queries that require integrating dispersed evidence or following multi-step reasoning chains.

Another branch of methods, GraphRAG algorithms typically adopt one of two organizational approaches to knowledge: they either replace the original corpus entirely with a triple-based graph as the principal knowledge repository~\cite{G3,G5,G1,G2, G4, logitRAG}, or link a knowledge graph with textual chunks from the corpus. However, the triple-replacement strategy unavoidably discards significant contextual information during the simplification process, which is often essential for accurate question answering. Meanwhile, the graph–chunk linking strategy—much like conventional chunk-based retrieval—constrains knowledge to fixed segments, limiting its ability to dynamically reorganize information according to varied query needs and potentially hindering precise retrieval of relevant content. Additionally, extracting reliable triple relations in open-domain settings remains a challenge. Errors in triple construction can result in incomplete or incorrect reasoning paths during retrieval, ultimately compromising the quality of generated answers. As illustrated in Fig.~\ref{fig:m1}, GraphRAG can be organized by graphs that appear well indexed yet are unreliable as knowledge representations. For instance, a host attribute is incorrectly typed as a disease symptom (e.g., linking \emph{Basal Cell Skin Cancer} to \emph{Darker Skin}), a multi-step causal explanation involving UV exposure and DNA damage is collapsed into a single coarse relation, and key mechanistic connections are missing altogether. Retrieval guided by such an index therefore follows structurally plausible but informationally distorted paths.

To address the limitations of existing methods in knowledge organization and indexing, this paper introduces AtomicRAG, a novel retrieval-augmented framework centered around an Atom–Entity Graph (AEG). During pre-processing, the corpus is decomposed into fine-grained, self-contained units called knowledge atoms, which serve as the basic representation of information. The AEG structurally organizes these atoms along with entities extracted from them, using unlabeled edges to capture co-occurrence relations—both between entities (Relevance Edges) and between atoms and their contained entities (Containment Edges). This graph-based representation enables flexible and precise retrieval, whether locally or globally, while providing a stable and reliable structure for semantic search.
At retrieval time, AtomicRAG adopts a query-decomposition strategy that decouples reasoning from retrieval. Complex queries are adaptively broken down into atom-aligned sub-questions, enabling fine-grained matching with the knowledge base. A entity-resonance graph retrieval mechanism then combines semantic similarity and graph-based relevance propagation to identify the most pertinent atoms. Finally, a filtering step removes redundant or irrelevant content, ensuring that only concise, high-utility evidence collected by all sub-questions is passed to the language model. This setting not only reduces noise during retrieval but also enhances the factual accuracy and provenance transparency of the generated answers.

The contributions of this paper are threefold: (1) We propose the Atom–Entity Graph (AEG), a novel knowledge representation that is more flexible and robust than conventional chunk-based or relation-labeled graphs. (2) We design a query-adaptive retrieval pipeline that first decomposes complex questions into atomic sub-questions and then uses entity-resonance graph propagation to accurately gather concise and relevant evidence. (3) Through both theoretical analysis and extensive experiments on five benchmarks, we demonstrate that AtomicRAG outperforms strong baselines in retrieval accuracy and reasoning robustness, especially for multi-hop queries that require evidence composition.

\section{Related Work}
\subsection{Retrieval-Augmented Generation}
Retrieval-augmented generation (RAG) grounds LLM outputs by retrieving external evidence and conditioning generation on the retrieved context~\cite{memorag,Hou}. 
Beyond passage-level indexing, Dense X Retrieval shows that proposition-level retrieval can improve retrieval quality and downstream QA under a fixed compute budget~\cite{DenseX}. 
To reduce ambiguity when retrieved passages are detached from their original document context, Contextual Retrieval augments each chunk with automatically generated, chunk-specific context for both dense and BM25 retrieval, and further benefits from reranking. 
For multi-step information needs, HyDE enriches queries via hypothetical document embeddings~\cite{HyDE}, while IRCoT, Iter-RetGen, and multi-hop dense retrievers such as MDR interleave or iterate retrieval with intermediate reasoning signals to progressively locate supporting evidence~\cite{IRCOT,Iter,MDR}. Complementary efforts such as RAPTOR and LLMLingua improve long-context usability via hierarchical organization and prompt compression, and Self-RAG studies on-demand retrieval with self-critique during decoding~\cite{RAPTOR,LLMLingua,selfrag}. 
Despite these advances, composing stable cross-document evidence chains remains challenging in multi-hop settings.

\begin{figure*}[!htb]
    \centering
    \includegraphics[width=0.97\textwidth]{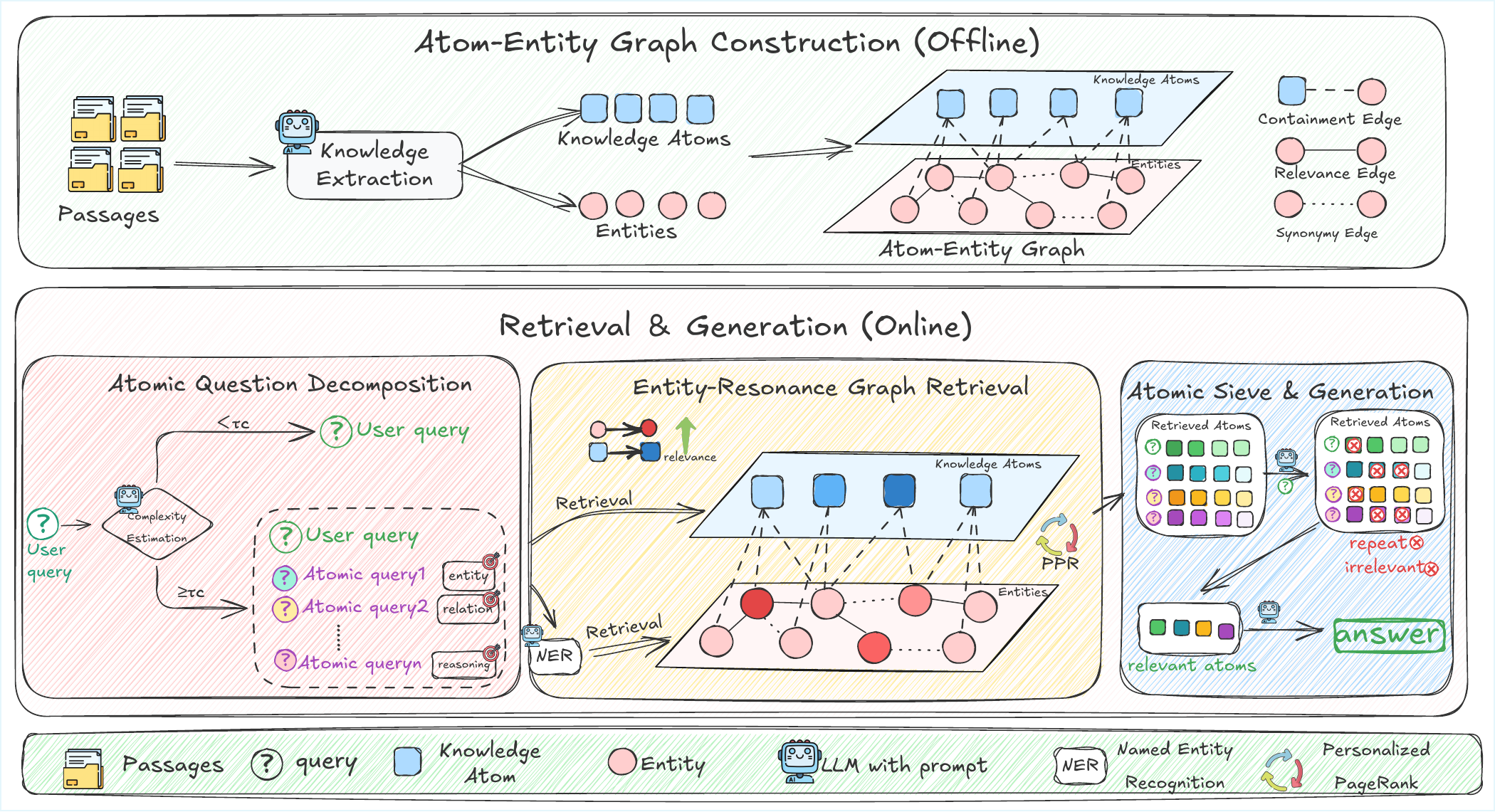}
    \caption{Overview of AtomicRAG. During the preprocessing phase, we construct an unlabeled Atom–Entity Graph (AEG) that atomizes the corpus into minimal knowledge atoms linked via entities and co-occurrence relationships. Specifically, as illustrated in the figure, our co-occurrence relationships fall into three types: containment, relevance, and synonymy. At retrieval time, a complex query is optionally decomposed into atomic sub-queries, which seed entity-resonance propagation over the AEG to retrieve multi-hop evidence. A final atomic sieve filters and merges retrieved atoms into a compact, deduplicated context for grounded answer generation.}
    \label{fig:atomicrag-structure}
\end{figure*}
\subsection{Graph-based RAG}
GraphRAG~\cite{HyperGraphRAG,routeRAG,graph-R1} organizes evidence units and entity associations with graph structures, extending retrieval from similarity-based top-k chunks to composable subgraph or path-level evidence. Microsoft’s GraphRAG~\cite{graphRAG} induces graph structure with large language models and leverages community-level summaries to strengthen cross-document aggregation and query-focused synthesis. The HippoRAG~\cite{Hippo1,Hippo2} line combines knowledge graphs with personalized PageRank, propagating from query seeds on the graph to integrate multi-hop information in a single retrieval process. LightRAG~\cite{Light} introduces graph structures and a two-stage retrieval pipeline to balance coverage and efficiency. GFM-RAG~\cite{GFM-RAG} further employs graph neural networks to enhance multi-hop reasoning over structured knowledge. Overall, the gains of graph-based approaches often hinge on high-quality graph construction: entity coverage, edge correctness and consistency, and the cost of continuous updates directly affect the reliability of graph traversal and path composition; when graphs are noisy or incomplete, multi-hop propagation and stitching can amplify errors and produce unstable evidence chains.
\section{Method}
Figure~\ref{fig:atomicrag-structure} gives an overview of AtomicRAG, which operates in four stages.
(i) Atom--Entity Graph Construction (Offline), which builds a persistent Atom--Entity Graph (AEG) as the knowledge store;
(ii) Atomic Question Decomposition, which rewrites $q$ into atom-level sub-questions;
(iii) Entity-Resonance Graph Retrieval, which propagates signals over the AEG to select candidate atoms; and
(iv) Atomic Sieve, which filters and orders atoms to form a compact context for downstream generation.
Together, these stages form an end-to-end pipeline for composing multi-hop evidence to answer complex queries.

\subsection{Atom--Entity Graph Construction}

\textbf{Joint extraction of knowledge atoms and triples.}
Given a corpus $\mathcal{C}=\{d_i\}_{i=1}^N$, we first apply an instruction-tuned LLM with an entity-centric prompt to each document $d_i$ and obtain canonical entities $\mathcal{E}_i$.
We then feed $(d_i,\mathcal{E}_i)$ into a second prompt for \emph{joint extraction of knowledge atoms and triples}, producing atoms $\mathcal{A}_i=\{a_{i,1},\ldots,a_{i,n_i}\}$ and triples $\mathcal{T}_i\subseteq \mathcal{E}_i\times \mathcal{R}\times \mathcal{E}_i$, where $\mathcal{R}$ denotes textual relation labels.
Triples are used \emph{only} to derive auxiliary entity–entity edges in the graph; they are never retrieved directly as evidence units. The AEG contains two node types.

\emph{Knowledge atom nodes.}
A knowledge atom $a_{i,j}\in\mathcal{A}_i$ is a minimal, self-contained natural-language statement parsed from $d_i$.
It is written to be context-complete and non-anaphoric, i.e., it avoids unresolved pronouns (e.g., ``it'', ``they'') and remains interpretable in isolation.
In practice, an atom typically focuses on a single topic or fact (e.g., the symptoms of basal cell skin cancer), rather than entangling multiple facets.
Each atom is retrievable as an independent evidence unit and is annotated with the entities it involves, denoted by $\mathcal{E}(a_{i,j})\subseteq \mathcal{E}_i$.

\emph{Entity nodes.}
Entities $\mathcal{E}_i=\{e_{i,1},\ldots,e_{i,m_i}\}$ are canonical concepts aggregated from surface mentions in $d_i$ (e.g., aliases or coreferent mentions).

Aggregating over the corpus yields global inventories
$\mathcal{A}=\bigcup_{i=1}^N\mathcal{A}_i$,
$\mathcal{E}=\bigcup_{i=1}^N\mathcal{E}_i$, and
$\mathcal{T}=\bigcup_{i=1}^N\mathcal{T}_i$.

\textbf{Connectivity organization.}
We define the Atom--Entity Graph as a heterogeneous, weighted graph $G=(V,\mathcal{L})$ with two node types (atoms and entities) and three edge types. We deliberately omit textual predicate labels and instead attach scalar weights to edges.
Edges in $\mathcal{L}$ are treated as bidirectional during propagation.
We organize connectivity using three co-occurrence relationships: \emph{containment edges}, \emph{relevance edges}, and \emph{synonym edges}.

\emph{Containment edges (weight $1$).}
We connect each atom to the entities it mentions, with unit weight:
\begin{equation}
\mathcal{L}_{\mathrm{cont}}
=\{(a,e)\mid a\in\mathcal{A},\, e\in\mathcal{E}(a)\},\qquad
w(a,e)=1.
\end{equation}

\emph{Relevance edges (weight = \#distinct relation types).}
To strengthen cross-sentence and cross-document connectivity without committing to brittle predicate semantics, we derive entity--entity relevance edges from triples.
For each entity pair $(e,e')$, we assign a scalar relevance weight equal to the number of \emph{distinct} relation labels connecting them:
\begin{equation}
w(e,e')
=\Bigl|\{\, r \mid (e,r,e')\in\mathcal{T}\ \text{or}\ (e',r,e)\in\mathcal{T}\,\}\Bigr|.
\end{equation}
Whenever $w(e,e')>0$, we add an undirected relevance edge $(e,e')$.
We do \emph{not} store textual labels $r$ on the graph; only the scalar weight $w(e,e')$ is retained.

\emph{Synonym edges (weight = similarity).}
To alleviate fragmentation caused by aliases or near-synonymous forms, we connect entities with similar representations.
Let $\mathbf{z}_e$ denote the embedding of entity $e$ produced by our shared encoder (detailed in the next paragraph).
If $\cos(\mathbf{z}_e,\mathbf{z}_{e'})\ge \tau_s$, we add a synonym edge $(e,e')$ with weight
$w(e,e')=\cos(\mathbf{z}_e,\mathbf{z}_{e'})$:
\begin{equation}
\mathcal{L}_{\mathrm{syn}}
=\{(e,e') \mid e,e'\in\mathcal{E},\ \cos(\mathbf{z}_e,\mathbf{z}_{e'})\ge \tau_s\}.
\end{equation}

The final edge set is $\mathcal{L}=\mathcal{L}_{\mathrm{cont}}\cup\mathcal{L}_{\mathrm{rel}}\cup\mathcal{L}_{\mathrm{syn}}$, where $\mathcal{L}_{\mathrm{rel}}=\{(e,e')\mid e,e'\in\mathcal{E},\, w(e,e')>0\}$.

\textbf{Vector representation storage.}
Atoms, entities, and (sub-)queries are embedded into a shared vector space using a common encoder $f_\theta(\cdot)$, i.e.,
$\mathbf{z}_a=f_\theta(a)$, $\mathbf{z}_e=f_\theta(e)$, and $\mathbf{z}_{q'}=f_\theta(q')$.
In practice, we store (i) an approximate nearest neighbor index over atom embeddings for semantic retrieval, (ii) an embedding table for entities, and (iii) the sparse weighted adjacency induced by $\mathcal{L}$ for graph propagation.
Thus, each knowledge-atom node is simultaneously a minimal semantic unit and a structural hook into the AEG.
\textbf{Proposition 1.} \textit{The Atom--Entity Graph provides a more comprehensive and more robust knowledge representation.}\vspace{-2mm}
\begin{proof}
We provide experimental evidence in Section~\ref{exp:graph} and a formal proof in Appendix~\ref{proof1}.
\end{proof}

\subsection{Atomic Question Decomposition}
Complex queries often implicitly decompose into multiple sub-questions whose answers rely on distinct evidence fragments.
Treating such queries as a single retrieval unit forces the retriever to match an entangled signal, thereby amplifying semantic drift and retrieval noise.
To mitigate this mismatch, we optionally perform \emph{atomic question decomposition}, producing query units whose granularity better matches that of our atomic knowledge.
Through this decomposition, atomic sub-queries and knowledge atoms are aligned at the level of \emph{evidence demand}: each $q^{(t)}$ is designed to seek a single, self-contained atomic evidence unit, reducing cross-facet entanglement during retrieval.

Given a query $q$, we prompt an LLM with a rubric-style instruction to assign a structural complexity score $c(q)\in[0,10]$; details of the scoring prompt are provided in the appendix.
If $c(q)$ exceeds a fixed threshold $\tau_c$, the query is decomposed into a small set of atomic sub-queries $\{q^{(1)},\dots,q^{(m)}\}$ with $m\le m_{\max}$.
We explicitly instruct the LLM to generate sub-queries that each target a specific facet of $q$ (e.g., grounding an entity mention, specifying a relation, or resolving an intermediate reasoning step), rather than producing arbitrary paraphrases.
We then define the effective query set
\begin{equation}
\widetilde{\mathcal{Q}}(q)=
\begin{cases}
\{q\}\cup\{q^{(1)},\dots,q^{(m)}\}, & c(q)\ge\tau_c,\\
\{q\}, & \text{otherwise}.
\end{cases}
\end{equation}
Each $q'\in\widetilde{\mathcal{Q}}(q)$ is processed independently in the subsequent retrieval stage, which reduces early entanglement between distinct evidence requirements.

\textbf{Proposition 2.} \textit{Granularity alignment between queries and atomic knowledge improves retrieval efficiency.}\vspace{-2mm}
\begin{proof}
We provide a formal proof in Appendix~\ref{proof2}.
\end{proof}

\subsection{Entity-Resonance Graph Retrieval}
Pure dense retrieval over atoms lacks mechanisms for organizing multi-hop evidence, whereas explicit symbolic reasoning over noisy predicate-typed relations is brittle.
Entity-Resonance Graph Retrieval instead uses the AEG as an unlabeled scaffold: it softly propagates query signals through shared entities, inducing interpretable evidence chains without relying on semantic predicates.

\textbf{Query-specific personalization.}
For each effective query $q' \in \widetilde{\mathcal{Q}}(q)$, we initialize a personalization distribution over graph nodes by combining two signals:
(i) dense similarity between $q'$ and atomic embeddings obtained from the atom index; and
(ii) entity mentions extracted from $q'$ and mapped to entity nodes.
Let $r^{(0)}_{\text{atom}}(q',v)$ and $r^{(0)}_{\text{ent}}(q',v)$ be non-negative seed weights on $V$,
with $r^{(0)}_{\text{atom}}(q',v)=0$ for $v\notin\mathcal{A}$ and $r^{(0)}_{\text{ent}}(q',v)=0$ for $v\notin\mathcal{E}$.
We attenuate the atomic seeds by a scalar $\alpha$ and then normalize the combined scores:
\begin{equation}
\tilde{\pi}_{q'}(v)= \alpha\, r^{(0)}_{\text{atom}}(q',v) + r^{(0)}_{\text{ent}}(q',v),
\end{equation}
\begin{equation}
\pi_{q'}(v)=\frac{\tilde{\pi}_{q'}(v)}{\sum_{u\in V}\tilde{\pi}_{q'}(u)},
\qquad
\sum_{v\in V}\pi_{q'}(v)=1.
\end{equation}
Here $\alpha$ down-weights direct atom-level similarity relative to entity-based signals; in all experiments, we set $\alpha = 0.1$, which biases the initialization toward entities while retaining a small amount of atomic evidence.

\textbf{Resonance propagation.}
Let $P$ be the row-normalized transition matrix of $G$.
We compute a personalized PageRank vector $\mathbf{r}_{q'}$ as the fixed point of
\begin{equation}
\mathbf{r}_{q'} = \rho\,\boldsymbol{\pi}_{q'} + (1-\rho) P^\top \mathbf{r}_{q'},
\end{equation}
where $\boldsymbol{\pi}_{q'}$ is the vector form of $\pi_{q'}$ and $\rho\in(0,1)$ is the restart probability.
We set $\rho = 0.3$ throughout.
This propagation distributes relevance mass along atom--entity--atom paths, amplifying atoms that are structurally well supported by the entities mentioned (or resolved via auxiliary links) in $q'$.
Atomic relevance scores are given directly by
\begin{equation}
s_{q'}(a) = r_{q'}(a), \qquad a\in\mathcal{A},
\end{equation}
and high-mass paths in $G$ constitute \emph{entity-resonance chains}, providing an explicit account of evidence flow for $q'$.

\subsection{Atomic Sieve and Grounded Generation}
Graph-based propagation over the AEG can still surface loosely related or redundant atoms.
We therefore apply a final semantic filtering step at the atomic level to ensure precision without reverting to coarse retrieval units.

\textbf{Atomic filtering.}
For each effective query $q' \in \widetilde{\mathcal{Q}}(q)$, we first select a small candidate set of atoms by their resonance scores:
\begin{equation}
\mathcal{R}(q') = \operatorname*{TopK}_{a\in\mathcal{A}} s_{q'}(a),
\qquad |\mathcal{R}(q')| = K,
\end{equation}
with $K = 25$ in all experiments.
Candidates from the original query and all sub-queries are merged as
\begin{equation}
\mathcal{R}(q) = \bigcup_{q'\in\widetilde{\mathcal{Q}}(q)} \mathcal{R}(q').
\end{equation}
We then obtain a filtered subset $\mathcal{S}(q) \subseteq \mathcal{R}(q)$ by prompting an instruction-tuned LLM to judge, for each $a\in\mathcal{R}(q)$, whether $a$ is necessary and relevant to the original query $q$.
Thus, sub-queries $q'$ are only used to expose diverse candidates, while all inclusion decisions are grounded in the original information need.

\textbf{Aggregation and generation.}
Filtered atoms are further merged at the source-document level to form the final evidence set
\begin{equation}
\mathcal{A}^*(q) \subseteq \mathcal{S}(q),
\end{equation}
where atoms that refer to overlapping spans from the same document are combined into a single citation unit to avoid redundant text and keep the context compact.

\begin{table*}[t]
\centering
\footnotesize
\caption{Performance comparison on Graph-Bench and multi-hop QA benchmarks. Fact, Reason, Summ., and Creat. denote Fact Retrieval, Complex Reasoning, Contextual Summarization, and Creative Generation, respectively. The final Avg. is the mean across all tasks. Best results are in \textbf{bold} and second-best results are \underline{underlined}. The improvement row reports absolute score gains (in points) of Ours over the best baseline; $\uparrow$ denotes increases.}
\setlength{\tabcolsep}{3.0pt}
\renewcommand{\arraystretch}{1.18}
\resizebox{\textwidth}{!}{%
\begin{tabular}{lccccccccccccccc}
\toprule
\multirow[c]{2}{*}{\parbox[c]{2.3cm}{\centering\textbf{Method}}} &
\multicolumn{5}{c}{Graph-Bench (Medical)} &
\multicolumn{5}{c}{Graph-Bench (Novel)} &
\multicolumn{4}{c}{Multi-hop QA} &
\multirow[c]{2}{*}{\parbox[c]{1.2cm}{\centering\textbf{Avg.}}} \\
\cmidrule(lr){2-6}\cmidrule(lr){7-11}\cmidrule(lr){12-15}
& Fact & Reason & Summ. & Creat. & Avg. &
  Fact & Reason & Summ. & Creat. & Avg. &
  HotpotQA & 2Wiki & MuSiQue & Avg. & \\
\midrule

\multicolumn{16}{c}{\textbf{Vanilla Retrieval-Augmented Generation}}\\
\specialrule{0.4pt}{0.3ex}{0.6ex}
RAG (w/o rerank) & 63.7 & 57.6 & 63.7 & 58.9 & 61.0 & 58.7 & 41.4 & 50.1 & 41.5 & 47.9 & 54.0 & 31.8 & 32.2 & 39.3 & 50.3 \\
RAG (w rerank)   & 64.7 & 58.6 & 65.8 & 60.6 & 62.4 & \underline{60.9} & 42.9 & 51.3 & 38.3 & 48.3 & 54.5 & 32.1 & 33.3 & 40.0 & 51.2 \\

\addlinespace[2pt]
\specialrule{0.6pt}{0.4ex}{0.6ex}
\multicolumn{16}{c}{\textbf{Graph-based Retrieval-Augmented Generation}}\\
\specialrule{0.4pt}{0.3ex}{0.6ex}
MS-GraphRAG (local)  & 38.6 & 47.0 & 41.8 & 53.1 & 45.1 & 49.3 & \underline{50.9} & 64.4 & 39.1 & 50.9 & 52.2 & 37.4 & 37.8 & 42.5 & 46.5 \\
MS-GraphRAG (global) & 16.4 & 15.6 & 19.8 & 20.8 & 18.1 & 36.9 & 43.2 & 56.9 & 41.1 & 44.5 & 37.6 & 39.5 & 35.4 & 37.5 & 33.0 \\
HippoRAG            & 56.1 & 55.8 & 59.8 & 64.4 & 59.0 & 52.9 & 38.5 & 48.7 & 38.9 & 44.8 & 38.1 & 29.8 & 25.7 & 31.2 & 46.2 \\
HippoRAG2           & \underline{66.3} & 61.9 & 63.1 & \underline{68.1} & \underline{64.8} &
                      60.1 & \textbf{53.4} & 64.1 & 48.3 & \underline{56.5} &
                      67.5 & 47.5 & 41.5 & 52.2 & \underline{58.3} \\
LightRAG            & 63.3 & 61.3 & 63.1 & 67.9 & 63.9 & 58.6 & 49.1 & 48.9 & 23.8 & 45.1 & 57.0 & 44.2 & 23.5 & 41.6 & 51.0 \\
Fast-GraphRAG       & 60.9 & 61.7 & \underline{67.9} & 65.9 & 64.1 & 56.9 & 48.5 & 56.4 & 46.2 & 52.0 & 62.3 & 47.8 & 42.1 & 50.7 & 56.1 \\
RAPTOR              & 54.0 & 53.2 & 58.7 & 62.4 & 57.1 & 49.3 & 38.6 & 47.1 & 38.0 & 43.2 & 66.4 & 50.1 & \underline{43.6} & \underline{53.4} & 51.0 \\
Lazy-GraphRAG       & 60.3 & 47.8 & 57.3 & 62.2 & 56.9 & 51.7 & 49.2 & 58.3 & 43.2 & 50.6 & 54.6 & 40.4 & 39.1 & 44.7 & 51.3 \\
KGP                 & 52.3 & 51.5 & 54.5 & 63.8 & 55.5 & 54.2 & 46.3 & 51.2 & 40.3 & 48.0 & 52.3 & 38.6 & 37.2 & 42.7 & 49.3 \\
StructRAG           & 55.4 & 56.2 & 62.5 & 60.2 & 58.6 & 53.8 & 46.3 & 54.3 & 42.2 & 49.1 & 49.2 & \underline{52.3} & 24.9 & 42.1 & 50.7 \\
KET-RAG             & 60.4 & 39.6 & 45.3 & 43.0 & 47.1 & 55.4 & 36.6 & 52.5 & 46.0 & 47.6 & 45.6 & 24.5 & 22.6 & 30.9 & 42.9 \\
GFM-RAG             & 63.5 & \underline{67.3} & 51.5 & 63.3 & 61.4 & 51.0 & 49.8 & \underline{66.3} & \underline{57.8} & 56.2 & \underline{68.6} & 43.6 & 39.9 & 50.7 & 56.6 \\

\specialrule{0.6pt}{0.4ex}{0.2ex}
\rowcolor{black!7}
Ours & \textbf{72.6} & \textbf{74.8} & \textbf{76.8} & \textbf{68.3} & \textbf{73.1} &
       \textbf{61.0} & \textbf{53.4} & \textbf{68.5} & \textbf{60.0} & \textbf{60.7} &
       \textbf{70.5} & \textbf{56.8} & \textbf{50.9} & \textbf{59.4} & \textbf{64.9} \\
\rowcolor{black!7}
\textit{Improv. vs best baseline} &
\textit{$\uparrow$6.3} & \textit{$\uparrow$7.5} & \textit{$\uparrow$8.9} & \textit{$\uparrow$0.2} & \textit{$\uparrow$8.3} &
\textit{$\uparrow$0.1} & \textit{-} & \textit{$\uparrow$2.2} & \textit{$\uparrow$2.2} & \textit{$\uparrow$4.3} &
\textit{$\uparrow$1.9} & \textit{$\uparrow$4.5} & \textit{$\uparrow$7.3} & \textit{$\uparrow$6.0} & \textit{$\uparrow$6.6} \\
\bottomrule
\end{tabular}%
}

\label{tab:graphbench_main}
\end{table*}
\section{Experiments}
This section presents the experimental setup, main results, and analyses. We answer the following research questions (RQs): \textbf{RQ1:} Does AtomicRAG outperform existing methods? \textbf{RQ2:} How does each major component of AtomicRAG contribute to performance? \textbf{RQ3:} Is our Atom–Entity Graph better than alternative graph organizations? \textbf{RQ4:} Can the Entity-Resonance Graph Retrieval strategy improve retrieval accuracy and efficiency? \textbf{RQ5:} What are the costs of AtomicRAG in indexing and generation? Additional analyses are provided in the appendix.
\subsection{Experimental Setup}

\textbf{Datasets and Metrics.}
We evaluate the effectiveness of AtomicRAG on two domain-specific benchmarks from Graph-Bench~\cite{G-ben} and three widely used multi-hop QA datasets (HotpotQA~\cite{hotpotqa}, 2WikiMultiHopQA~\cite{2wiki}, and MuSiQue~\cite{musique}).
For Graph-Bench \textit{Medical} and \textit{Novel}, queries are categorized into four question types with increasing difficulty: \textit{Fact Retrieval}, \textit{Complex Reasoning}, \textit{Contextual Summarization}, and \textit{Creative Generation}.
For all five datasets, we follow the Graph-Bench preprocessing protocol for consistency: 
documents are segmented into chunks of 256 tokens with an overlap of 32 tokens.
As the evaluation metric, we adopt the \textit{Answer Accuracy (ACC)} proposed by Graph-Bench, 
which combines LLM-based judging with embedding-based semantic matching; 
detailed definitions and implementation are provided in Appendix~\ref{app:baseline_chunk}. 

\textbf{Baselines and Implementation Details.}
We group baselines into two categories.
(i) Vanilla RAG: a standard dense-retrieval pipeline with the same generator, 
evaluated both without reranking and with reranking.
(ii) Graph-enhanced RAG: representative systems that organize evidence with explicit structures, including MS-GraphRAG~\cite{graphRAG}, RAPTOR~\cite{RAPTOR}, LightRAG~\cite{Light}, HippoRAG~\cite{Hippo1}, HippoRAG2~\cite{Hippo2}, Fast-GraphRAG~\cite{fastgraphrag}, LazyGraphRAG~\cite{LazyGraphRAG}, KET-RAG~\cite{KETrag}, KGP~\cite{KGP}, StructRAG~\cite{StructRAG}, and GFM-RAG~\cite{GFM-RAG}.  
To ensure a controlled comparison, all methods use the same embedding model (\texttt{BAAI/bge-large-en-v1.5}).
For both answer generation and LLM-based evaluation, we use the same backbone LLM (\texttt{GPT-4o-mini}).
Full specifications are in Appendix~\ref{app:metrics}.

\subsection{Main Results (RQ1)}
\textbf{Overall comparison.}
To assess the effectiveness of our method, we compare it with strong vanilla RAG variants and a broad set of graph-enhanced RAG baselines across Graph-Bench and multi-hop QA benchmarks. Results are reported in Table~\ref{tab:graphbench_main}. Our method achieves the best overall average across all task columns (Avg.=64.9), consistently outperforming all baselines. Notably, we attain the top score on most task columns and tie for the best on Graph-Bench (Novel) \textit{Reason}, indicating that the improvement is not driven by a single dataset or question type, but holds broadly across tasks.

\textbf{Across benchmarks and domains.}
Performance gains remain consistent across all benchmark groups. On Graph-Bench (Medical), we reach 73.1 Avg., improving over the best baseline by +8.3; on Graph-Bench (Novel), we achieve 60.7 Avg.\ with a +4.3 gain; on Multi-hop QA, we obtain 59.4 Avg.\ with a +6.0 gain. The largest margins appear on harder benchmarks and domains that require chaining dispersed evidence (e.g., MuSiQue), a pattern consistent with improved multi-hop evidence composition rather than gains limited to single-hop matching.

\textbf{Across question types.}
Our method improves performance uniformly across question types, reflecting broad coverage rather than isolated, type-specific gains. On Graph-Bench (Medical), we simultaneously improve \textit{Fact/Reason/Summ.} by +6.3/+7.5/+8.9, suggesting that the same design choices strengthen factual grounding, multi-step evidence chaining, and cross-atom synthesis instead of over-optimizing a single skill. On Graph-Bench (Novel), we stay competitive (often best or near-best) on already-strong types while still lifting the weaker ones, which narrows the gap between categories and raises the overall ceiling. In contrast, several baselines show higher variance across question types—excelling on a subset but degrading on others—whereas our method remains consistently strong across all categories, indicating better robustness to question-style shifts.

\subsection{Ablation Results (RQ2)}
\begin{table}[t]
\centering
\footnotesize
\caption{Ablation on three datasets. Parentheses indicate score drops relative to AtomicRAG.
ERGR: Entity-Resonance Graph Retrieval; AQD: Atomic Question Decomposition; AS: Atomic Sieve; KA: Knowledge Atomization.}
\setlength{\tabcolsep}{5.2pt}
\renewcommand{\arraystretch}{1.14}
\resizebox{\linewidth}{!}{%
\begin{tabular}{lcccc}
\toprule
\textbf{Method} & \textbf{HotpotQA} & \textbf{Medical} & \textbf{Novel} & \textbf{Avg.} \\
\midrule
AtomicRAG & 70.5 & 73.2 & 60.7 & 68.1 \\
\midrule

\multicolumn{5}{l}{\textcolor{gray}{\footnotesize\textbf{Single-module removal}}} \\
\cmidrule(lr){1-5}
w/o ERGR  &
\underline{68.4}\textcolor{gray}{\scriptsize($\downarrow$2.1)} &
72.0\textcolor{gray}{\scriptsize($\downarrow$1.2)} &
59.1\textcolor{gray}{\scriptsize($\downarrow$1.6)} &
66.5\textcolor{gray}{\scriptsize($\downarrow$1.6)} \\
w/o AQD   &
67.8\textcolor{gray}{\scriptsize($\downarrow$2.7)} &
\underline{72.3}\textcolor{gray}{\scriptsize($\downarrow$0.9)} &
\underline{60.0}\textcolor{gray}{\scriptsize($\downarrow$0.7)} &
66.7\textcolor{gray}{\scriptsize($\downarrow$1.4)} \\
w/o AS    &
68.0\textcolor{gray}{\scriptsize($\downarrow$2.5)} &
71.5\textcolor{gray}{\scriptsize($\downarrow$1.7)} &
59.8\textcolor{gray}{\scriptsize($\downarrow$0.9)} &
66.4\textcolor{gray}{\scriptsize($\downarrow$1.7)} \\
w/o KA    &
62.2\textcolor{gray}{\scriptsize($\downarrow$8.3)} &
64.3\textcolor{gray}{\scriptsize($\downarrow$8.9)} &
52.7\textcolor{gray}{\scriptsize($\downarrow$8.0)} &
59.7\textcolor{gray}{\scriptsize($\downarrow$8.4)} \\
\midrule

\multicolumn{5}{l}{\textcolor{gray}{\footnotesize\textbf{Combined removal}}} \\
\cmidrule(lr){1-5}
w/o \{ERGR, AQD\} &
66.0\textcolor{gray}{\scriptsize($\downarrow$4.5)} &
70.8\textcolor{gray}{\scriptsize($\downarrow$2.4)} &
58.6\textcolor{gray}{\scriptsize($\downarrow$2.1)} &
65.1\textcolor{gray}{\scriptsize($\downarrow$3.0)} \\
w/o \{ERGR, AQD, AS\} &
64.5\textcolor{gray}{\scriptsize($\downarrow$6.0)} &
69.5\textcolor{gray}{\scriptsize($\downarrow$3.7)} &
57.5\textcolor{gray}{\scriptsize($\downarrow$3.2)} &
63.8\textcolor{gray}{\scriptsize($\downarrow$4.3)} \\
w/o \{ERGR, AQD, AS, KA\} &
54.0\textcolor{gray}{\scriptsize($\downarrow$16.5)} &
61.0\textcolor{gray}{\scriptsize($\downarrow$12.2)} &
47.9\textcolor{gray}{\scriptsize($\downarrow$12.8)} &
54.3\textcolor{gray}{\scriptsize($\downarrow$13.8)} \\
\bottomrule
\end{tabular}%
}
\vspace{1pt}
\label{tab:ablation_three}
\end{table}
Table~\ref{tab:ablation_three} reports ablations on HotpotQA, Graph-Bench (Medical), and Graph-Bench (Novel), using Avg.\ as the main metric. AtomicRAG achieves 68.1 Avg. Removing any single module consistently reduces performance, confirming that the gains come from complementary components rather than any single component.

\textbf{Single-module impact.}
Each module provides a measurable benefit: removing ERGR/AQD/AS reduces Avg.\ by 1.6/1.4/1.7, respectively, while removing KA causes the largest single drop to 59.7 ($-8.4$). The effects align with their roles: AQD is most critical on HotpotQA ($-2.7$), ERGR degrades performance uniformly across datasets, and removing AS consistently hurts performance, indicating that the sieve effectively filters noisy atoms and sharpens evidence precision.

\textbf{Synergy under combined removal.}
The drops compound when modules are removed jointly: removing \{ERGR, AQD\} reduces Avg.\ by 3.0, and further removing AS increases the drop to 4.3. This demonstrates clear complementarity between decomposition (AQD), graph retrieval (ERGR), and final filtering (AS).

\textbf{Knowledge atomization is foundational.}
Removing KA leads to the largest performance degradation. In the \textit{w/o KA} variant, we disable knowledge atomization and replace atomic knowledge units with the original text chunks as retrieval units. Even with only KA removed, Avg.\ drops sharply to 59.7 ($-8.4$), and removing all modules including KA further degrades to 54.3 ($-13.8$). These results indicate that atom-level granularity is essential: without it, ERGR and AS cannot reliably form and refine evidence paths, and the system largely degenerates to coarse-grained chunk retrieval.

\begin{table}[t]
\centering
\footnotesize
\caption{\textbf{Graph structure statistics.} We report the number of nodes and edges, average degree, and average clustering coefficient for the constructed graphs on Graph-Bench (Medical).}
\label{tab:graph_stats}
\resizebox{\linewidth}{!}{%
\begin{tabular}{ccccc}
\toprule
\textbf{Method} & \textbf{\#Nodes} & \textbf{\#Edges} & \textbf{Avg.\ degree} & \textbf{Avg.\ clustering} \\
\midrule
KET-RAG   & 3{,}134  & 2{,}421   & 1.55  & 0.24 \\
LightRAG  & 1{,}942  & 2{,}220   & 2.29  & 0.09 \\
HippoRAG2 & \underline{10{,}660} & \underline{119{,}489} & \underline{22.42} & \underline{0.38} \\
GFM-RAG   & 9{,}569  & 40{,}444  & 8.45  & 0.29 \\
\rowcolor{gray!15}
Ours      & \textbf{14{,}586} & \textbf{146{,}115} & \textbf{23.04} & \textbf{0.39} \\
\bottomrule
\end{tabular}%
}
\end{table}
\begin{figure}[!htb]
    \centering
    \includegraphics[width=0.36\textwidth]{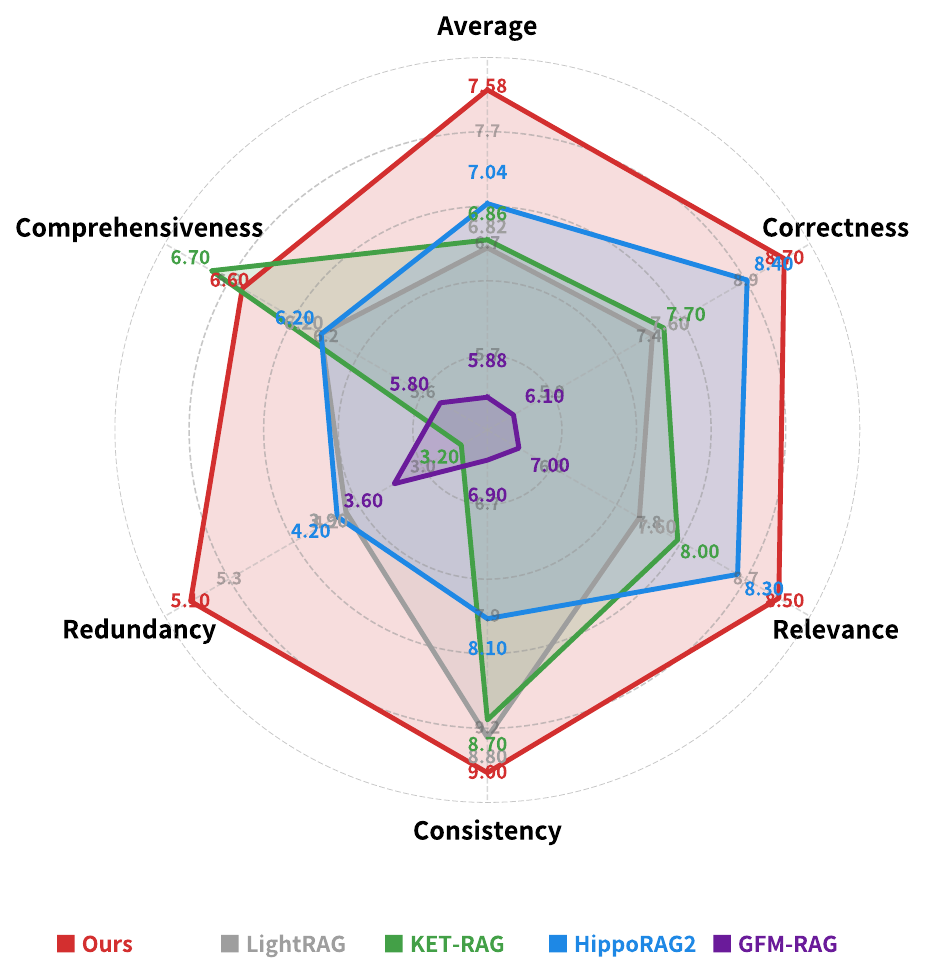}
\caption{\textbf{Semantic utility.} LLM-based assessment of 1-hop graph neighborhoods on Graph-Bench (Medical) with respect to correctness, relevance, consistency, redundancy, and comprehensiveness.}

    \label{fig:graph}
\end{figure}

\subsection{Graph Quality Analysis (RQ3)}\label{exp:graph}
We next examine the quality of the constructed graphs from both \emph{structural connectivity} and \emph{semantic utility}.

\textbf{Structural connectivity.}
Table~\ref{tab:graph_stats} reports basic structural statistics.
Compared with prior baselines, our graph is larger and exhibits slightly stronger local connectivity.
Such connectivity is desirable for composing multi-hop evidence, but it does not by itself guarantee that neighborhoods are correct or useful for grounding.

\textbf{Semantic utility.}
To assess whether neighborhoods are semantically helpful, we conduct an LLM-based neighborhood evaluation.
On Graph-Bench (Medical), we sample 10 high-frequency entities (by corpus mention count) and extract their 1-hop neighborhoods from each constructed graph.
Given the center entity and its neighborhood, we prompt \texttt{gpt-oss-120b}~\cite{GPT-oss} to score the neighborhood \emph{as a whole} along five criteria: \textit{Correctness}, \textit{Relevance}, \textit{Consistency}, \textit{Redundancy}, and \textit{Comprehensiveness}.
We average scores across entities and visualize the results in Figure~\ref{fig:graph}.
We observe a slight decrease in comprehensiveness compared with KET-RAG (6.60 vs.\ 6.70), while KET-RAG scores notably lower on redundancy.
Overall, AtomicRAG achieves higher correctness, relevance, and consistency while reducing redundancy, yielding more usable grounding neighborhoods for RAG.



\subsection{Retrieval Efficiency Analysis (RQ4)}
\begin{figure}[!htb]
    \centering
    \includegraphics[width=0.45\textwidth]{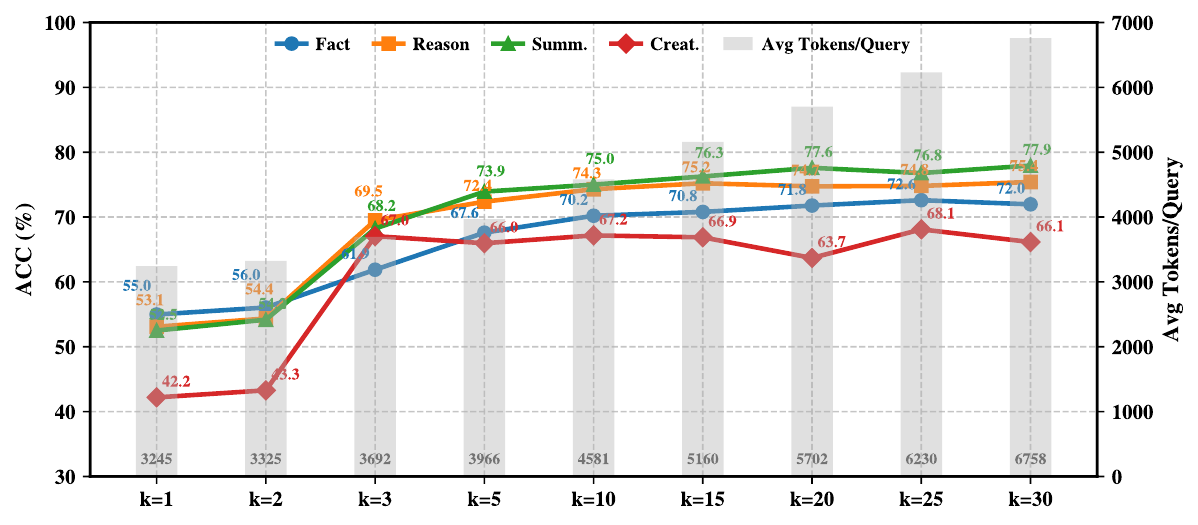}
\caption{Impact of the retrieval Top-$k$ hyperparameter on answer accuracy and token length: Top-$k$ specifies how many knowledge atoms AtomicRAG retrieves per query, and token length is the total number of tokens in the LLM input formed by the question and the retrieved atoms.}
    \label{fig:re}
\end{figure}

\begin{figure}[!htb]
    \centering
    \includegraphics[width=0.43\textwidth]{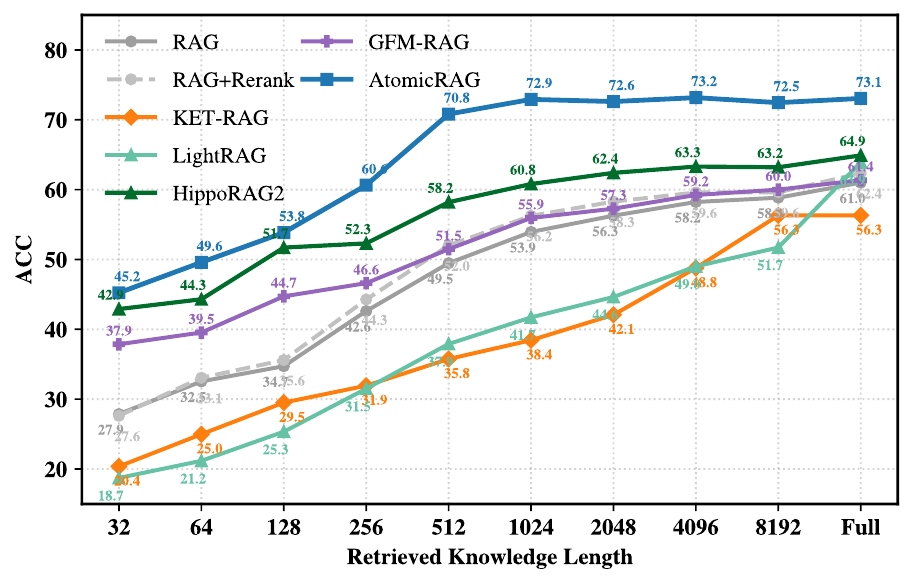}
    \caption{Accuracy under limited context lengths : each point is evaluated with a fixed context budget, defined as the maximum number of tokens permitted in the LLM input, and all methods are truncated to this budget before generation.}
    \label{fig:prompt}
\end{figure}
We evaluate efficiency in terms of (i) the accuracy–token trade-off as Top-$k$ varies, (ii) robustness under fixed context budgets, and (iii) per-query retrieval latency.

\textbf{Effect of Top-$k$.}
In Figure~\ref{fig:re}, the token cost increases monotonically from 3245 at $k{=}1$ to 6230 at $k{=}25$, while accuracy improves rapidly at small $k$ (e.g., \textit{Reason} and \textit{Summ.} rise sharply from $k{=}1$ to $k{=}3$) and then saturates for larger $k$ (roughly $k{\ge}10$), indicating that most decisive evidence is already covered with a moderate Top-$k$ and additional retrieval mainly introduces redundancy.

\textbf{Performance under limited lengths.}
Figure~\ref{fig:prompt} shows that our method remains strong under tight budgets, reaching 70.8 ACC at a context budget of 512 tokens and stabilizing around 73 thereafter, whereas strong baselines improve more gradually and require substantially longer contexts to approach their best performance, demonstrating that our retrieved evidence is denser and more budget-efficient in length-constrained settings.

\textbf{Retrieval latency.}
Table~\ref{tab:efficiency_acc} reports the retrieval latency per query. Our method is the fastest among graph-based baselines, with 0.79 seconds per query, compared to 2.73 for KET-RAG, 4.50 for HippoRAG2, 6.08 for GFM-RAG, and 13.99 for LightRAG. This speedup is consistent with our atom–entity representation: relevance propagation operates on a compact graph, avoiding repeated multi-round expansion in prior systems.

\subsection{Token Overhead (RQ5)}
Table~\ref{tab:efficiency_acc} reports indexing/QA token usage, latency, and ACC on Graph-Bench (Medical). Our method keeps query-time tokens low relative to graph methods with expensive multi-round expansion. While GFM-RAG uses fewer total tokens (3.60M), it is slower (6.08 s/q) and less accurate (61.4 ACC); AtomicRAG achieves the best accuracy with low latency and moderate total tokens. Compared with HippoRAG2, we spend slightly more tokens in indexing (+0.70M) but save more during QA (-1.66M), resulting in a lower total token cost (4.87M vs.\ 5.83M) while improving ACC by 8.3 points. Overall, AtomicRAG incurs a small additional cost during graph construction, but this overhead is compensated by lower QA-time token usage and higher answer accuracy.

\begin{table}[t]
\centering
\footnotesize
\caption{\textbf{Efficiency and performance comparison.} We report token consumption for indexing and QA (in millions, M), overall token cost (indexing + QA), average retrieval latency, and answer accuracy (ACC). Each token entry is shown as \emph{Total} with the \textcolor{gray}{(Prompt + Completion)} breakdown in gray.}
\label{tab:efficiency_acc}
\resizebox{\linewidth}{!}{%
\begin{tabular}{cccccc}
\toprule
\textbf{Method} &
\textbf{Index tokens (M)} &
\textbf{QA tokens (M)} &
\textbf{Total tokens (M)} &
\textbf{Latency (s/q)} &
\textbf{ACC (\%)} \\
\midrule
Vanilla RAG       &
0.00~\textcolor{gray}{(0.00 + 0.00)} &
2.87~\textcolor{gray}{(2.68 + 0.19)} &
2.87 &
0.01 &
62.4 \\
\midrule
KET-RAG   &
2.31~\textcolor{gray}{(1.74 + 0.57)} &
24.45~\textcolor{gray}{(24.42 + 0.03)} &
26.76 &
\underline{2.73} &
47.1 \\
LightRAG  &
2.66~\textcolor{gray}{(2.35 + 0.31)} &
61.51~\textcolor{gray}{(60.74 + 0.77)} &
64.17 &
13.99 &
63.9 \\
HippoRAG2 &
\underline{1.54}~\textcolor{gray}{(1.18 + 0.36)} &
4.29~\textcolor{gray}{(3.96 + 0.34)} &
5.83 &
4.50 &
\underline{64.9} \\
GFM-RAG   &
\textbf{1.52}~\textcolor{gray}{(1.19 + 0.33)} &
\textbf{2.08}~\textcolor{gray}{(1.72 + 0.36)} &
\textbf{3.60} &
6.08 &
61.4 \\
\rowcolor{gray!15}
Ours      &
2.24~\textcolor{gray}{(1.67 + 0.56)} &
\underline{2.63}~\textcolor{gray}{(1.97 + 0.67)} &
\underline{4.87} &
\textbf{0.79} &
\textbf{73.2} \\
\bottomrule
\end{tabular}%
}
\end{table}

\section{Conclusion}
This work identifies a core mismatch between knowledge representation and knowledge indexing in existing RAG systems and introduces AtomicRAG to explicitly decouple the two: semantic content is carried solely by knowledge atoms, while an unlabeled Atom–Entity Graph provides only reachability and aggregation priors rather than encoding predicate semantics. Extensive experiments show that this design yields more stable evidence chains, more compact retrieval contexts, and better accuracy–efficiency trade-offs in multi-hop settings, making AtomicRAG a practical solution for knowledge-intensive, complex queries.
\section*{Impact Statement}
This paper presents work whose goal is to advance the field of 
Machine Learning. There are many potential societal consequences 
of our work, none which we feel must be specifically highlighted here.

\nocite{langley00}

\bibliography{example_paper}
\bibliographystyle{icml2026}

\newpage
\appendix
\onecolumn
\appendix
\onecolumn 


\appendix
\section*{Appendix}
\addcontentsline{toc}{section}{Appendix}
This appendix provides supplementary material that complements the main paper by enabling full reproducibility, isolating the sources of performance gains, and offering both theoretical and qualitative insights. 
Section~\ref{app:repro} documents implementation details and evaluation protocols, including baseline configurations, prompt templates, per-dataset graph statistics, and runtime/token/cost breakdowns, so that all results can be reproduced under consistent settings. 
Section~\ref{app:ablation} reports ablation and sensitivity studies that vary embedding models, LLM backbones, graph retrieval strategies, and key Personalized PageRank (PPR) hyperparameters, clarifying which design choices drive the improvements. 
Section~\ref{appendix:proofs} presents complete proofs of the theoretical claims stated in the main text. 
Section~\ref{app:case} provides qualitative case studies comparing AtomicRAG with representative RAG variants to illustrate typical failure modes and how atomic-level structured retrieval mitigates them. 
Section~\ref{app:prompts} includes the full prompt templates used throughout the pipeline for exact reproducibility. Finally, Section~\ref{app:limitation} analyzes the limitations of AtomicRAG and discusses practical failure modes and directions for improvement.

\section{Reproducibility Details}
\label{app:repro}
\subsection{Baselines and Implementation Details}
\label{app:baselines}
\subsubsection{Baselines}
\label{app:baseline_chunk}

We group baselines into two categories: (i) \textbf{Vanilla RAG}, a standard dense-retrieval pipeline with the same generator, evaluated \emph{without} reranking and \emph{with} reranking; and (ii) \textbf{Graph-enhanced RAG}, representative systems that organize evidence with explicit structures, including MS-GraphRAG (Local/Global), RAPTOR, LightRAG, HippoRAG, HippoRAG2, Fast-GraphRAG, LazyGraphRAG, KET-RAG, KGP, StructRAG, and GFM-RAG.

\paragraph{Vanilla RAG.}
We implement a standard dense retriever over text chunks and use the same generator as AtomicRAG. We report two variants: \textbf{w/o reranking}, which uses the retriever-returned order; and \textbf{w/ reranking}, which reorders retrieved chunks with \texttt{bge-reranker-large} before generation to improve evidence prioritization.

\paragraph{RAPTOR.}
RAPTOR constructs a hierarchical tree index by recursively clustering chunks and summarizing clusters into higher-level nodes. At inference time, it retrieves across multiple abstraction levels (leaf chunks and internal summaries), enabling evidence selection that balances local detail and global context.

\paragraph{MS-GraphRAG (Local/Global).}
MS-GraphRAG organizes the corpus into an entity/community graph and supports two retrieval modes. \textbf{Local} retrieval gathers fine-grained supporting evidence from entity-centric neighborhoods, while \textbf{Global} retrieval aggregates community-level evidence for corpus-level questions, typically via structured aggregation and summarization.

\paragraph{LightRAG.}
LightRAG incorporates lightweight graph organization into indexing and retrieval, supporting both local evidence lookup and higher-level discovery over graph-structured information, with an emphasis on simplicity and efficiency in practical deployments.

\paragraph{HippoRAG.}
HippoRAG builds a schemaless knowledge graph from the corpus and performs associative multi-hop retrieval via graph propagation (e.g., Personalized PageRank). This mechanism improves multi-hop evidence discovery without relying on expensive iterative prompting or heavy inference-time exploration.

\paragraph{HippoRAG2.}
HippoRAG2 extends HippoRAG-style associative retrieval with stronger passage integration and more effective use of LLM modules, targeting improved evidence connectivity and robustness in multi-hop reasoning settings.

\paragraph{Fast-GraphRAG.}
Fast-GraphRAG is an efficiency-oriented GraphRAG variant that accelerates structure-aware retrieval, leveraging fast graph exploration/propagation to identify relevant nodes and passages under practical latency constraints.

\paragraph{LazyGraphRAG.}
LazyGraphRAG reduces up-front indexing and summarization costs by shifting portions of structure-aware retrieval to inference time, using budgeted/on-the-fly exploration to balance computational cost and answer quality.

\paragraph{KET-RAG.}
KET-RAG emphasizes cost-efficient indexing through multi-granular construction: it first selects key chunks to build a lightweight graph skeleton and then leverages an auxiliary structure over the full corpus to support retrieval without fully materializing a dense knowledge graph.

\paragraph{KGP.}
KGP (Knowledge Graph Prompting) uses an LLM-guided traversal process over a passage/structure graph, iteratively navigating graph nodes to gather supporting passages. The graph provides global constraints on evidence transitions for multi-document and multi-hop question answering.

\paragraph{StructRAG.}
StructRAG performs inference-time hybrid structurization: it retrieves raw evidence and then restructures it into a task-appropriate schema or structured context before reasoning, aiming to improve global integration and reduce sensitivity to scattered or noisy evidence.

\paragraph{GFM-RAG.}
GFM-RAG employs a learned graph retriever (e.g., a GNN-based retriever) to reason over graph structure and retrieve relevant evidence, improving robustness compared to purely heuristic traversal or propagation on noisy graphs.

\paragraph{Implementation and evaluation protocol.}
All graph-enhanced baselines are implemented strictly following the configurations and evaluation protocol of \textsc{Graph-Bench} for fair comparison. For the \textbf{Medical} and \textbf{Novel} benchmarks, we directly report the official \textsc{Graph-Bench} results, as these datasets are evaluated under fixed standardized settings. For \textbf{multi-hop} tasks, we faithfully reimplement each baseline using the corresponding \textsc{Graph-Bench} configurations for indexing, retrieval, and inference, and evaluate them under the same experimental conditions as AtomicRAG.

\subsubsection{AtomicRAG Default Configuration}
\label{app:default_config}

\paragraph{Backbone models.}
We use \texttt{gpt-4o-mini} as the generator LLM and \texttt{BAAI/bge-large-en-v1.5} as the embedding model. The embedding maximum sequence length is set to 2048.

\begin{table}[t]
\centering
\small
\setlength{\tabcolsep}{4.5pt}
\begin{tabular}{l c p{6.6cm}}
\toprule
\textbf{Component / Parameter} & \textbf{Value} & \textbf{Description} \\
\midrule
LLM & \texttt{gpt-4o-mini} & Generator used for final answer synthesis. \\
Embedding model & \texttt{BAAI/bge-large-en-v1.5} & Dense encoder for atoms/passages and queries. \\
embedding\_max\_seq\_len & 2048 & Maximum input length for the embedding model. \\
retrieval\_top\_k & 25 & Number of retrieved candidates atoms per query for downstream selection. \\
\midrule
synonymy\_edge\_topk & 2047 & Top-$k$ nearest neighbors used to construct synonymy edges (KNN). \\
synonymy\_edge\_sim\_threshold & 0.8 & Minimum similarity required to add a synonymy edge. \\
\midrule
entity\_node\_weight & 1.0 & Weight factor for entity seeds when initializing propagation. \\
entity\_top\_k & 20 & Max number of entity nodes retained per query as initial seeds. \\
entity\_sim\_threshold & 0.3 & Minimum similarity for an entity to be considered a valid seed. \\
\midrule
propagation\_method & \texttt{ppr} & Graph propagation method (Personalized PageRank). \\
damping & 0.3 & PPR damping factor controlling restart probability. \\
passage\_node\_weight & 0.1 & Weight assigned to passage/atom nodes in the propagation graph. \\
propagation\_num\_iter & 20 & Iterations for iterative propagation. \\
propagation\_num\_walks & 1000 & Number of random walks used for Monte-Carlo PPR estimation. \\
propagation\_walk\_length & 10 & Length of each random walk. \\
\midrule
max\_sub\_questions & 3 & Maximum number of induced sub-questions per query . \\
complexity\_threshold & 6.5 & Threshold for triggering query decomposition. \\
\bottomrule
\end{tabular}
\caption{Hyperparameters and default settings used in our AtomicRAG implementation.}
\label{tab:hyperparams}
\end{table}
 We set \texttt{retrieval\_top\_k=25} for candidate retrieval. For multi-hop queries, we use a conservative decomposition budget (\texttt{max\_sub\_questions=3}) and raise the decomposition trigger threshold (\texttt{complexity\_threshold=6.5}) to avoid over-fragmenting simple queries while still enabling decomposition on genuinely complex questions. For graph retrieval, we use Personalized PageRank (\texttt{propagation\_method=ppr}) with \texttt{damping=0.3}; synonymy edges are constructed via KNN with \texttt{synonymy\_edge\_topk=2047} and filtered by \texttt{synonymy\_edge\_sim\_threshold=0.8}.
\subsubsection{Per-Dataset Graph Statistics}
\label{app:graph-stats}

Table~\ref{tab:aeg_stats} summarizes the graph-level statistics of the constructed Atom--Entity Graph (AEG) for each dataset. In our AEG, \textbf{nodes} are the union of \textbf{entity nodes} and \textbf{atomic knowledge nodes} (knowledge atoms). \textbf{Edges} are partitioned into three types: (i) \textbf{related} edges linking entity--entity pairs (capturing co-occurrence/association), (ii) \textbf{synonym} edges connecting semantically similar entities (constructed via embedding KNN and thresholding), and (iii) \textbf{containment} edges linking atoms to the entities they mention (atom--entity incidence). These statistics provide a concrete view of corpus-dependent graph size and sparsity, which directly affect indexing cost and propagation-based retrieval efficiency.

Across datasets, the AEG size scales with corpus complexity: multi-hop QA datasets such as HotpotQA and MuSiQue yield the largest graphs (both in nodes and total edges), reflecting broader entity coverage and denser inter-entity connectivity. In contrast, Medical exhibits substantially fewer nodes but a comparatively high edge count per node, indicating a more densely connected entity space under synonymy and containment relations. Novel shows moderate graph size with balanced edge composition, consistent with narrative-style corpora that introduce many entities but comparatively fewer cross-document associations than encyclopedic QA benchmarks.

\begin{table}[t]
\centering
\small
\setlength{\tabcolsep}{5pt}
\begin{tabular}{lrrrrrrr}
\toprule
\textbf{Dataset} & \textbf{\#Nodes} & \textbf{\#Entities} & \textbf{\#Atoms} & \textbf{\#Edges} & \textbf{\#Related} & \textbf{\#Synonym} & \textbf{\#Containment} \\
\midrule
HotpotQA           & 112,995 & 84,592 & 28,403 & 479,895 & 177,734 & 276,577 & 25,584 \\
MuSiQue            & 118,218 & 87,446 & 30,772 & 488,497 & 175,794 & 286,923 & 25,780 \\
2WikiMultiHop      & 59,782  & 44,750 & 15,032 & 234,008 & 93,446  & 125,422 & 15,140 \\
Medical            & 14,647  & 9,087  & 5,560  & 143,919 & 21,068  & 99,707  & 23,144 \\
Novel  & 59,807  & 40,217 & 19,590 & 176,948 & 75,665  & 83,481  & 17,802 \\
\bottomrule
\end{tabular}
\caption{Per-dataset statistics of the Atom--Entity Graph (AEG). Nodes consist of entity nodes and atomic knowledge nodes. Edges include entity--entity related edges, synonym edges, and atom--entity containment edges.}
\label{tab:aeg_stats}
\end{table}
\subsubsection{Runtime, Token, and Cost Breakdown}
\label{app:cost}

Table~\ref{tab:atomicrag_runtime_token_cost} reports a detailed breakdown of runtime, token usage, and monetary cost across the major modules of \textsc{AtomicRAG}. The results highlight a clear separation between \emph{graph-centric} and \emph{LLM-centric} costs.

From a runtime perspective, \textbf{Entity-Resonance Graph Retrieval} dominates end-to-end latency, accounting for over one-third of total execution time. This reflects the cost of large-scale graph traversal and propagation, which is compute-intensive but largely independent of LLM usage. In contrast, modules involving heavy LLM interaction—such as \textbf{Atomic Sieve} and \textbf{Grounded Answer Generation}—consume a smaller share of wall-clock time despite extensive prompting.

From a token and cost perspective, the \textbf{Atomic Sieve} is the primary contributor, responsible for more than 60\% of total token consumption and cost. This is expected, as the sieve performs fine-grained, fragment-level relevance filtering with long prompt contexts. Atom--Entity Graph Construction and Atomic Question Decomposition incur moderate, one-time or query-level LLM costs, while Entity-Resonance Graph Retrieval introduces no LLM token overhead.

Overall, this breakdown demonstrates that AtomicRAG’s computational cost is dominated by a small number of interpretable stages: graph propagation for latency and LLM-based atom filtering for monetary cost. This modular separation enables targeted optimization—e.g., accelerating graph traversal or pruning candidate atoms before sieving—without redesigning the entire pipeline.

\begin{table}[t]
\centering
\scriptsize
\setlength{\tabcolsep}{2.6pt}
\renewcommand{\arraystretch}{1.10}
\begin{tabular}{>{\raggedright\arraybackslash}m{3.8cm} rr | rrrr | rrr}
\toprule
\textbf{Module (AtomicRAG)}
& \multicolumn{2}{c}{\textbf{Runtime}}
& \multicolumn{4}{c}{\textbf{Tokens}}
& \multicolumn{3}{c}{\textbf{Cost (USD)}} \\
\cmidrule(lr){2-3}\cmidrule(lr){4-7}\cmidrule(lr){8-10}
& \textbf{Time (s)} & \textbf{Share}
& \textbf{Prompt} & \textbf{Completion} & \textbf{Total} & \textbf{Share}
& \textbf{Prompt} & \textbf{Completion} & \textbf{Total} \\
\midrule
Atom--Entity Graph Construction
& 188.84 & 9.4\%
& 1,673,875 & 563,101 & 2,236,976 & 13.8\%
& 0.251 & 0.338 & 0.589 \\

Atomic Question Decomposition
& 111.60 & 5.5\%
& 739,474 & 303,724 & 1,043,198 & 6.5\%
& 0.111 & 0.182 & 0.293 \\

Entity-Resonance Graph Retrieval
& 735.80 & 36.5\%
& -- & -- & -- & --
& -- & -- & -- \\

Atomic Sieve
& 210.69 & 10.4\%
& 10,098,938 & 158,475 & 10,257,413 & 63.4\%
& 1.515 & 0.095 & 1.610 \\

Grounded Answer Generation
& 210.33 & 10.4\%
& 1,967,021 & 666,828 & 2,633,849 & 16.3\%
& 0.295 & 0.400 & 0.695 \\
\midrule
\textbf{Total}
& \textbf{2,018.46} & \textbf{100\%}
& \textbf{14,479,308} & \textbf{1,692,128} & \textbf{16,171,436} & \textbf{100\%}
& \textbf{2.172} & \textbf{1.015} & \textbf{3.187} \\
\bottomrule
\end{tabular}
\caption{Module-level runtime, token consumption, and monetary cost of \textsc{AtomicRAG}. ``Prompt'' and ``Completion'' denote the input and output tokens of the underlying LLM calls, respectively. Costs are reported in USD.}
\label{tab:atomicrag_runtime_token_cost}
\end{table}
\subsubsection{Evaluation Metrics}
\label{app:metrics}

We follow the evaluation protocol in Graph-Bench and report the standard generation-quality metrics. In particular, we adopt \textbf{Answer Accuracy} as the primary accuracy metric, which jointly evaluates semantic alignment and factual correctness to avoid over-rewarding answers that are fluent but hallucinated, or factual but semantically mismatched. Notably, this metric combines an \emph{LLM-based claim-level verifier} with \emph{embedding-based semantic similarity}, yielding a hybrid scoring function that reflects both factual faithfulness and semantic alignment (the verifier prompt templates are provided in Graph-Bench).

\paragraph{Answer Accuracy.}
Answer Accuracy (ACC) provides a dual assessment of answer quality by combining (i) \emph{semantic similarity} and (ii) \emph{fine-grained factual verification}:
\begin{equation}
\mathrm{ACC} = \alpha \cdot \mathrm{FC} + (1-\alpha)\cdot \mathrm{SS},
\label{eq:ac}
\end{equation}
where $\alpha$ is a weighting coefficient (we use $\alpha=0.7$ by default following Graph-Bench).

\paragraph{Factual Correctness.}
Factual correctness $\mathrm{FC}$ is computed via statement-level verification and summarized as an F1-style score:
\begin{equation}
\mathrm{FC}=\frac{2\cdot \mathrm{TP}}{2\cdot \mathrm{TP}+\mathrm{FP}+\mathrm{FN}},
\label{eq:fc}
\end{equation}
where $\mathrm{TP}$ is the number of verified correct claims, $\mathrm{FP}$ is the number of incorrect (hallucinated) claims, and $\mathrm{FN}$ is the number of missing reference claims not covered by the generated answer. This formulation explicitly penalizes both hallucinations (FP) and omissions (FN). In Graph-Bench, $\mathrm{TP}/\mathrm{FP}/\mathrm{FN}$ are obtained by prompting an LLM to verify fine-grained claims against the reference evidence; see the official Graph-Bench prompt templates for the exact verifier instructions.

\paragraph{Semantic Similarity.}
Semantic similarity $\mathrm{SS}$ is measured by embedding-based cosine similarity:
\begin{equation}
\mathrm{SS} = \cos\!\left(\mathbf{e}(\hat{y}),\, \mathbf{e}(y)\right),
\label{eq:ss}
\end{equation}
where $\hat{y}$ and $y$ denote the generated answer and the reference answer, respectively, and $\mathbf{e}(\cdot)$ maps text to its embedding representation. This term rewards semantic alignment even when surface forms differ.

\section{Ablation and Sensitivity Analyses}
\label{app:ablation}
\subsection{Embedding Model Ablations}
\label{app:ablation_embedding}
We ablate the embedding model while keeping the Atom--Entity Graph construction, retrieval, and generation prompts fixed. Table~\ref{tab:embedding_models} shows that performance is moderately sensitive to the embedding choice: stronger general-purpose English embeddings consistently improve both fact retrieval and multi-hop reasoning, which in turn lifts the overall average. Among all candidates, \texttt{bge-large-en-v1.5} achieves the best Avg. score and leads on three out of four categories, indicating that our retrieval pipeline benefits most from embeddings with high semantic separability at the atomic-text granularity. In contrast, models with competitive summarization scores (e.g., \texttt{nomic-embed}) do not necessarily translate to stronger factuality or reasoning, suggesting that optimizing for long-form semantic similarity alone is insufficient for evidence selection in multi-hop settings. Unless stated otherwise, we adopt \texttt{bge-large-en-v1.5} in all experiments.

\begin{table}[t]\centering\small\setlength{\tabcolsep}{6pt}
\begin{tabular}{lccccc}\toprule
Embedding Model & Fact & Reason & Summ. & Creat. & Avg. \\\midrule
\rowcolor{gray!15} bge-large-en-v1.5 & \textbf{72.6} & \textbf{74.8} & \underline{76.8} & \textbf{68.3} & \textbf{73.1} \\
gte-qwen2 (1.5B) & 70.9 & \underline{74.3} & 76.2 & \underline{66.7} & \underline{72.0} \\
gte-large & \underline{71.4} & 73.7 & 76.2 & 66.3 & 71.9 \\
nomic-embed & 71.1 & 73.4 & \textbf{77.3} & 65.7 & 71.9 \\
bge-m3 & 71.7 & \textbf{74.3} & 75.1 & 66.4 & 71.8 \\
e5-large & 70.2 & 71.5 & 75.1 & 66.7 & 70.9 \\\bottomrule
\end{tabular}
\caption{\textbf{Ablation on embedding models.}  Best results are in bold and second-best results are underlined.}
\label{tab:embedding_models}
\end{table}

\subsection{LLM Backbone Ablations}\label{app:ablation_llm}
We study the effect of the generator by keeping the retrieval pipeline unchanged and replacing only the LLM backbone. Table~\ref{tab:llm_ablation} reports per-category results and the overall average. Overall, AtomicRAG remains robust across backbones: stronger instruction-tuned models yield consistent improvements, but the relative ranking across categories is stable, implying that the primary gains come from retrieval quality rather than model-specific prompting quirks. Notably, \texttt{GPT-4o-mini} attains the highest Avg., while other backbones incur predictable degradations that correlate with model capacity; however, even with smaller backbones (e.g., \texttt{Qwen2.5-7B} and \texttt{Llama-3.1-8B}), the method maintains reasonable performance, indicating that the retrieved atomic evidence is sufficiently targeted to reduce the burden on the generator.

\begin{table}[t]
\centering
\small
\setlength{\tabcolsep}{6pt}
\renewcommand{\arraystretch}{1.15}
\begin{tabular}{lccccc}
\toprule
\textbf{Method} & \textbf{Fact} & \textbf{Reason} & \textbf{Summ.} & \textbf{Creat.} & \textbf{Avg.} \\
\midrule
RAG (w/ rerank)      & 57.6 & 56.0 & 62.0 & 60.9 & 59.1 \\
\midrule
GraphRAG (local)     & 49.2 & 61.6 & 59.0 & 61.7 & 57.9 \\
GraphRAG (global)    & 40.0 & 61.4 & 51.1 & 59.2 & 52.9 \\
HippoRAG2            & 64.5 & 64.1 & 64.7 & 60.8 & 63.5 \\
LightRAG             & 64.4 & 64.7 & 69.4 & 63.3 & 65.4 \\
Fast-GraphRAG        & 62.0 & 62.4 & 65.1 & 63.0 & 63.1 \\
RAPTOR               & 50.5 & 52.9 & 58.9 & 61.5 & 55.9 \\
\midrule
\rowcolor{gray!15} \textbf{Ours} & \textbf{68.2} & \textbf{71.1} & \textbf{74.9} & \textbf{68.9} & \textbf{70.8} \\
\rowcolor{gray!15} \textit{Improv.\ vs best baseline} & $\uparrow$3.7 & $\uparrow$6.4 & $\uparrow$5.5 & $\uparrow$5.7 & $\uparrow$5.3 \\
\bottomrule
\end{tabular}
\caption{ACC on the Medical dataset using Qwen2.5-14B for generation-based evaluation. Avg. is the mean over four categories. \textit{Improv.\ vs best baseline} reports absolute score gains (in points) of Ours over the best baseline (excluding Ours); $\uparrow$ denotes increases.}
\label{tab:medical_qwen_baselines}
\end{table}


\newcommand{\vendoricon}[2]{\raisebox{-0.18em}{\includegraphics[height=1.15em]{#1}}~#2}

\begin{table}[t]
\centering
\footnotesize
\setlength{\tabcolsep}{4.0pt}
\renewcommand{\arraystretch}{1.15}
\begin{tabular}{l l ccccc c}
\toprule
\textbf{Vendor} & \textbf{LLM} & \textbf{Fact} & \textbf{Reason} & \textbf{Summ.} & \textbf{Creat.} & \textbf{Avg.} & $\Delta$Avg \\
\midrule
\rowcolor{gray!12}
\vendoricon{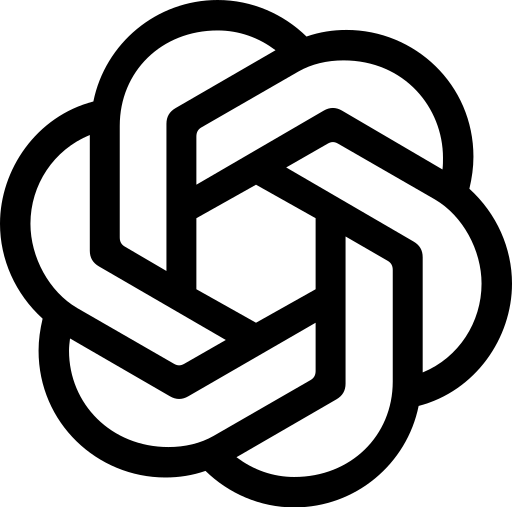}{OpenAI}   & GPT-4o-mini            & \textbf{72.6} & \textbf{74.8} & \underline{76.8} & \underline{68.3} & \textbf{73.1} & -- \\
\vendoricon{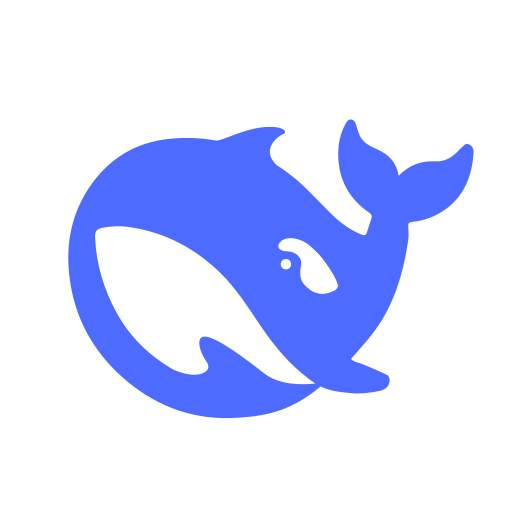}{DeepSeek} & DeepSeek-V3          & \underline{69.3} & \underline{73.0} & \textbf{79.5} & 67.0 & \underline{72.2} & -0.9 \\
\vendoricon{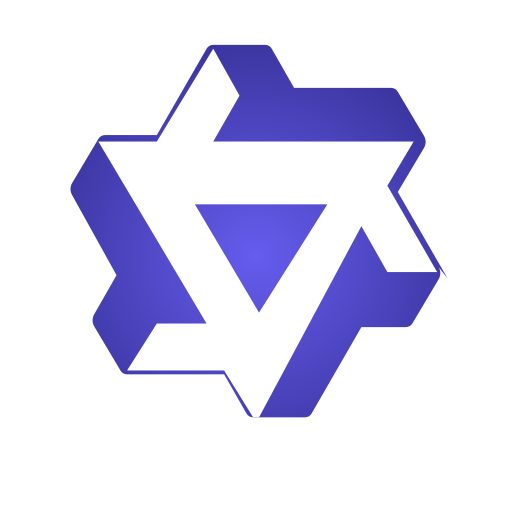}{Qwen}       & Qwen2.5-14B-Instruct   & 68.2 & 71.1 & 74.9 & \textbf{68.9} & 70.8 & -2.3 \\
\vendoricon{fig/logo/qwen.png}{Qwen}       & Qwen2.5-7B-Instruct    & 64.7 & 69.2 & 73.2 & 65.3 & 68.1 & -5.0 \\
\raisebox{-0.2em}{\includegraphics[height=0.95em]{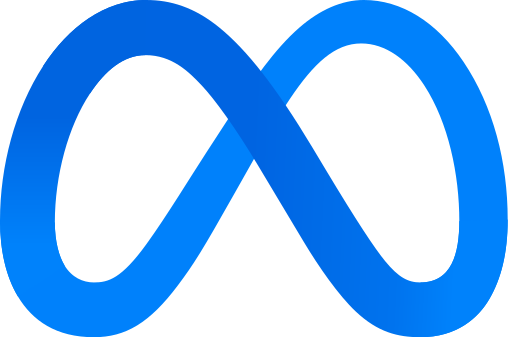}}~Meta
& Llama-3.1-8B-Instruct  & 64.7 & 67.8 & 70.8 & 62.5 & 66.5 & -6.6 \\
\bottomrule
\end{tabular}
\caption{\textbf{Ablation on LLM backbones.} We report per-category scores and the overall average (Avg.). $\Delta$Avg denotes the absolute point change relative to GPT-4o-mini.}
\label{tab:llm_ablation}
\end{table}

\begin{table}[t]
\centering
\footnotesize
\setlength{\tabcolsep}{4.2pt}
\renewcommand{\arraystretch}{1.15}
\begin{tabular}{lccccccc}
\toprule
\textbf{Retriever} & \textbf{Fact} & \textbf{Reason} & \textbf{Summ.} & \textbf{Creat.} & \textbf{Avg.} & $\Delta$Avg & \textbf{Time (s/q)} \\
\midrule
\rowcolor{gray!12}
PPR (Baseline)     & \textbf{72.6} & \textbf{74.8} & 76.8 & \textbf{68.3} & \textbf{73.1} & --   & \underline{0.79} \\
RWR                & 62.4 & 70.9 & 74.5 & 65.0 & 68.2 & -4.9 & \textbf{0.78} \\
Power Iteration    & 70.7 & \underline{73.3} & \textbf{78.6} & \underline{67.9} & \underline{72.6} & -0.5 & 18.67 \\
Katz Index         & \underline{70.7} & 73.0 & \underline{76.9} & 65.7 & 71.6 & -1.5 & 20.28 \\
Label Propagation  & 67.2 & 71.5 & 71.9 & 65.5 & 69.0 & -4.1 & 1.38 \\
Weighted BFS       & 61.0 & 68.1 & 69.8 & 64.3 & 65.8 & -7.3 & 13.37 \\
\bottomrule
\end{tabular}
\caption{\textbf{Ablation on graph retrieval strategies.} We report per-category scores and the overall average (Avg.). lower is better for Time. $\Delta$Avg denotes the absolute point change relative to PPR.}
\label{tab:retrieval_ablation}
\end{table}
Table~\ref{tab:llm_ablation} shows that \textbf{GPT-4o-mini} achieves the best overall score (Avg.\ 73.1), suggesting that generator strength remains a key factor even when retrieval is held constant.
DeepSeek-V3 is the closest alternative (72.2, only 0.9 points behind) and attains the best summarization performance (79.5), indicating stronger long-form synthesis once relevant evidence is provided, albeit with slightly weaker fact- and reasoning-centric results than GPT-4o-mini.
Qwen2.5-14B-Instruct ranks third (70.8) but delivers the highest creative generation score (68.9), implying that different model priors may favor open-ended, stylistic completion over strict grounding.
In contrast, the two smaller instruction models exhibit a clear drop (68.1 and 66.5), with the most pronounced degradation on Complex Reasoning and Creative Generation, consistent with reduced robustness in multi-step evidence integration and global coherence under identical prompting and retrieved context.
\subsection{Additional Baseline Comparison on Medical}
\label{app:medical_baselines_qwen}

To complement the main results that use GPT-4o-mini as the generator and evaluator, we further compare AtomicRAG with representative RAG and graph-enhanced baselines on the \textbf{Medical} split under a unified evaluator. Table~\ref{tab:medical_qwen_baselines} reports Answer Accuracy (ACC) when all methods are evaluated using Qwen2.5-14B for generation-based judging under the official Graph-Bench protocol. This setting fixes the evaluation backbone across systems, so that performance differences more directly reflect the quality of their retrieval and evidence-organization pipelines rather than idiosyncrasies of different LLM judges.

Overall, AtomicRAG achieves the highest Avg.\ score (70.8), outperforming the best baseline, LightRAG (65.4), by more than five absolute points. The gains are consistent across all four categories: AtomicRAG improves over the strongest competing method by several points on Fact Retrieval, Complex Reasoning, Contextual Summarization, and Creative Generation, as summarized in the \emph{Improv.\ vs best baseline} row. These results indicate that, even when judged by a strong external LLM under standardized evaluation, atomic-level evidence organization and entity-resonance retrieval provide more reliable evidence chains than chunk-based or predicate-centric graph baselines on domain-specific medical QA.

Table~\ref{tab:medical_qwen_baselines} lists RAG with reranking, two GraphRAG retrieval modes (local/global), HippoRAG2, LightRAG, Fast-GraphRAG, RAPTOR, and our AtomicRAG. The \emph{Improv.\ vs best baseline} row reports absolute ACC gains (in points) of AtomicRAG over the strongest baseline in each column, offering a concise view of the margin under this unified evaluation protocol.

\subsection{Graph Retrieval Variants}
\label{app:ablation_graph_retrieval}

We compare representative graph diffusion and path-based ranking strategies for retrieving evidence atoms on the entity--atom graph.
Random Walk with Restart (RWR) approximates PPR via Monte Carlo restartable walks from query seeds.
Power Iteration explicitly solves the PPR fixed-point equation through iterative updates until convergence.
Katz Index ranks nodes by counting seed-to-node paths with exponential decay by path length.
Label Propagation performs iterative label diffusion (smoothing) from seeded nodes across the graph.
Weighted BFS conducts hop-based expansion with distance-decay weights, yielding a heuristic diffusion score.

Table~\ref{tab:retrieval_ablation} shows that \textbf{PPR} achieves the best quality--latency trade-off, obtaining the highest overall score (Avg.\ 73.1) with low retrieval latency (0.79 s/query).
RWR matches PPR in latency (0.78 s/query) but incurs a large accuracy drop (Avg.\ 68.2, $\Delta$Avg $=-4.9$), indicating that under our sampling budget the Monte Carlo estimator yields higher-variance and less stable rankings, which is detrimental for evidence-centric tasks.
Power Iteration and Katz remain competitive in score (Avg.\ 72.6 and 71.6), and Power Iteration even improves summarization (78.6 vs.\ 76.8), but both are prohibitively slow online (18.67 and 20.28 s/query), suggesting that per-query convergence and long-range aggregation dominate runtime.
Label Propagation is relatively efficient (1.38 s/query) yet underperforms on average (Avg.\ 69.0, $\Delta$Avg $=-4.1$), consistent with over-smoothing effects that dilute sharp relevance signals.
Weighted BFS performs the worst overall (Avg.\ 65.8) while still being slow (13.37 s/query), implying that hop-based heuristics neither approximate the stationary distribution reliably nor scale well when the search frontier expands.
Overall, this ablation supports using \textbf{stationary-distribution ranking (PPR)} as the default retriever: sampling approximations (RWR) sacrifice too much accuracy, exact solvers/path counting (Power Iteration, Katz) are computationally mismatched to online retrieval, and alternative diffusion schemes (Label Propagation, Weighted BFS) tend to blur discriminative evidence.

\begin{table}[t]
\centering
\small
\setlength{\tabcolsep}{5pt}

\begin{minipage}[t]{0.49\linewidth}
\centering
\setlength{\tabcolsep}{6pt}
\begin{tabular}{cccccc}
\toprule
Damping ($\rho$) & Fact & Reason & Summ. & Creat. & Avg. \\
\midrule
0.1 & \underline{71.8} & 72.6 & \textbf{77.2} & 67.2 & \underline{72.2} \\
0.2 & 69.8 & 72.5 & 75.7 & 66.3 & 71.1 \\
\rowcolor{gray!15}
0.3 & \textbf{72.6} & \textbf{74.8} & \underline{76.8} & \textbf{68.3} & \textbf{73.1} \\
0.4 & 70.5 & \underline{72.9} & 74.9 & 66.1 & 71.1 \\
0.5 & 71.3 & \underline{73.2} & 76.5 & 65.6 & 71.7 \\
0.6 & 69.8 & 72.3 & 75.1 & 67.2 & 71.1 \\
0.7 & 71.9 & 72.9 & 76.5 & \underline{67.9} & 72.3 \\
0.8 & 65.1 & 70.6 & 69.6 & 64.8 & 67.5 \\
0.9 & 61.7 & 67.6 & 69.2 & 66.4 & 66.2 \\
\bottomrule
\end{tabular}
\caption{Ablation on restart/damping coefficient $\rho$. Best in bold and second-best underlined.}
\label{tab:ablation_damping}
\end{minipage}\hfill
\begin{minipage}[t]{0.49\linewidth}
\centering
\setlength{\tabcolsep}{6pt}
\begin{tabular}{cccccc}
\toprule
Atom weight ($\lambda_{\text{seed}}$) & Fact & Reason & Summ. & Creat. & Avg. \\
\midrule
0.01 & \underline{71.1} & 72.8 & \underline{76.2} & \underline{67.5} & \underline{71.9} \\
0.05 & 71.3 & \underline{73.2} & \textbf{76.5} & 65.6 & 71.7 \\
\rowcolor{gray!15}
0.1 & \textbf{72.6} & \textbf{74.8} & \underline{76.8} & \textbf{68.3} & \textbf{73.1} \\
0.2 & 70.0 & 72.7 & 75.1 & 64.4 & 70.6 \\
0.3 & 69.6 & 72.9 & 74.7 & 65.2 & 70.6 \\
0.4 & 69.9 & 72.6 & 74.4 & 64.4 & 70.3 \\
0.5 & 69.2 & 72.1 & 73.3 & 64.0 & 69.7 \\
0.6 & 68.4 & 70.9 & 73.8 & 60.4 & 68.4 \\
0.7 & 64.2 & 66.1 & 69.5 & 60.9 & 65.2 \\
0.8 & 62.7 & 67.7 & 69.7 & 59.4 & 64.9 \\
0.9 & 59.5 & 62.0 & 64.1 & 58.7 & 61.1 \\
1.0 & 59.0 & 62.2 & 62.6 & 55.3 & 59.8 \\
\bottomrule
\end{tabular}
\caption{Ablation on atom weight $\lambda_{\text{seed}}$. Best in bold and second-best underlined.}
\label{tab:ablation_lambda}
\end{minipage}

\end{table}

\subsection{PPR Hyperparameter Sensitivity}
\label{app:ppr_hparams}
We analyze the sensitivity of our PPR-based resonance retrieval to two key hyperparameters: the restart/damping coefficient $\rho$ and the atom-seed personalization weight $\lambda_{\text{seed}}$. In all runs, we keep the graph, query decomposition, reranking, and generation settings fixed, and vary only the target hyperparameter.

\paragraph{Damping coefficient $\rho$.}
Table~\ref{tab:ablation_damping} shows a clear ``sweet spot'' around $\rho=0.3$, which yields the best overall Avg. score. Smaller values (e.g., $\rho\in[0.1,0.2]$) emphasize local neighborhoods around the seed distribution, which can preserve salient evidence for summarization but tends to under-explore longer multi-hop paths, limiting gains on reasoning. As $\rho$ increases beyond $0.5$, the walk becomes increasingly dominated by graph diffusion; while this can slightly help coverage in some cases, it also amplifies connectivity noise and dilutes query-specific focus. This effect becomes pronounced for large $\rho$ (e.g., $\rho\ge 0.8$), where performance drops substantially across all categories, consistent with over-smoothing toward high-degree or globally central entities.

\paragraph{Atom-seed weight $\lambda_{\text{seed}}$.}
Table~\ref{tab:ablation_lambda} varies the mass assigned to atom-derived seeds in the personalization vector. We find that moderate seeding ($\lambda_{\text{seed}}=0.1$) is optimal and robust, indicating that atom-level signals should strongly guide propagation but still leave room for entity-level diffusion to bridge hops. When $\lambda_{\text{seed}}$ is too small (e.g., $0.01$), PPR relies more heavily on coarse entity connectivity, reducing selectivity and hurting fact/creativity. Conversely, overly large $\lambda_{\text{seed}}$ (e.g., $\ge 0.3$) makes the walk overly myopic: rankings become dominated by a narrow set of seed-adjacent atoms, which degrades cross-entity aggregation and harms multi-hop retrieval, leading to a monotonic decline as $\lambda_{\text{seed}}\rightarrow 1.0$.

Overall, these results suggest that AtomicRAG benefits from a balanced regime where PPR propagation is neither too local nor too global, and where atom-level personalization provides a strong but not exclusive anchor for multi-hop evidence discovery. Unless stated otherwise, we use $\rho=0.3$ and $\lambda_{\text{seed}}=0.1$.


\section{Proofs}
\label{appendix:proofs}

\subsection{Proof of Proposition 1: AEG is more comprehensive and robust than predicate-labeled Knowledge Graph}
\label{proof1}

This section formalizes the representational and robustness advantages of the Atom--Entity Graph (AEG) used in AtomicRAG.
We compare against \emph{predicate-labeled knowledge graphs} commonly used in graph-based RAG, where semantic content is carried by
extracted predicate-typed edges.

\paragraph{Predicate-labeled KG baseline.}
A predicate-labeled knowledge graph is a directed, predicate-typed multigraph
\[
G_{\mathrm{KG}}=(\mathcal{E},\mathcal{R},\mathcal{T}), \qquad \mathcal{T}\subseteq \mathcal{E}\times \mathcal{R}\times \mathcal{E},
\]
where each triple $(h,r,t)\in\mathcal{T}$ is interpreted as a relational assertion $r(h,t)$.
This baseline does \emph{not} include qualifiers, reification, provenance nodes, or higher-order statements.

\paragraph{AEG recap.}
AEG is a heterogeneous graph $G_{\mathrm{AEG}}=(V,\mathcal{L})$ with node set $V=\mathcal{A}\cup\mathcal{E}$.
Semantic content is carried exclusively by atoms $a\in\mathcal{A}$, where each atom is a minimal, self-contained proposition.
Graph edges provide only organization: the backbone containment edges
\[
\mathcal{L}_{\mathrm{cont}}=\{(a,e)\mid a\in\mathcal{A},\, e\in \mathcal{E}(a)\}
\]
encode only the structural fact that entity $e$ is mentioned in atom $a$.
Optional auxiliary entity--entity links serve as weak connectivity cues and are not predicate-typed commitments.

\subsubsection{Comprehensiveness: embedding KG into AEG and strictness}

\begin{definition}[KG-to-AEG embedding]
\label{def:kg_to_aeg}
Given a predicate-labeled KG $G_{\mathrm{KG}}=(\mathcal{E},\mathcal{R},\mathcal{T})$, define an AEG
$\Phi(G_{\mathrm{KG}})=(V,\mathcal{L})$ as follows.
For every triple $(h,r,t)\in\mathcal{T}$, create an atom node $a_{h,r,t}\in\mathcal{A}$ whose atom text encodes the proposition $r(h,t)$,
and set $\mathcal{E}(a_{h,r,t})=\{h,t\}$.
Add containment edges $(a_{h,r,t},h)$ and $(a_{h,r,t},t)$ to $\mathcal{L}_{\mathrm{cont}}$.
No predicate-typed edge is added to $\mathcal{L}$.
\end{definition}

\begin{definition}[AEG-to-KG projection]
\label{def:aeg_to_kg}
Define a projection $\pi_{\mathrm{KG}}$ that maps $\Phi(G_{\mathrm{KG}})$ back to a predicate-labeled KG
by parsing each atom $a_{h,r,t}$ into the corresponding triple $(h,r,t)$ and returning the set of all such triples.
\end{definition}

\begin{lemma}[AEG can represent any predicate-labeled KG]
\label{lem:kg_embeddable}
For any predicate-labeled KG $G_{\mathrm{KG}}$, we have $\pi_{\mathrm{KG}}(\Phi(G_{\mathrm{KG}}))=G_{\mathrm{KG}}$.
\end{lemma}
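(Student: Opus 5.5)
The plan is to prove the identity by a direct set-theoretic argument, showing that $\Phi$ is injective on triples and that $\pi_{\mathrm{KG}}$ inverts it triple by triple. I would first make explicit the one assumption that Definition~\ref{def:kg_to_aeg} leaves implicit. The atom text of $a_{h,r,t}$ must be a \emph{faithful encoding} of $(h,r,t)$: the map $(h,r,t)\mapsto \mathrm{text}(a_{h,r,t})$ is injective, and the parser used by $\pi_{\mathrm{KG}}$ in Definition~\ref{def:aeg_to_kg} recovers $(h,r,t)$ from that text. Any canonical serialization suffices, for example the string ``$r(h,t)$'' with reserved delimiters. I will state this explicitly as the encoding convention, since the lemma is exactly as strong as this convention.

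Next I would compute $\pi_{\mathrm{KG}}(\Phi(G_{\mathrm{KG}}))$ componentwise. By construction, the atom set of $\Phi(G_{\mathrm{KG}})$ is $\mathcal{A}=\{a_{h,r,t}:(h,r,t)\in\mathcal{T}\}$, one atom per triple. The projection therefore returns $\mathcal{T}'=\{\mathrm{parse}(a):a\in\mathcal{A}\}=\{(h,r,t):(h,r,t)\in\mathcal{T}\}=\mathcal{T}$. This gives both inclusions: every triple yields an atom that parses back to it, and every atom arises from some triple. The entity set of the projected KG is the set of heads and tails of $\mathcal{T}'$, and the relation set is the set of predicates appearing in $\mathcal{T}'$. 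Both coincide with $\mathcal{E}$ and $\mathcal{R}$ under the usual convention that a KG is determined by its triple set. Alternatively, the entity set can be read off the entity nodes of $\Phi(G_{\mathrm{KG}})$, whose containment edges $(a_{h,r,t},h)$ and $(a_{h,r,t},t)$ exhibit exactly the endpoints.

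The main obstacle is not computational but definitional, and it has two parts. The first is isolated entities or unused relation labels: entities in $\mathcal{E}$ appearing in no triple, or labels in $\mathcal{R}$ used by no triple. For entities, I would extend $\Phi$ to add every $e\in\mathcal{E}$ as an entity node, possibly isolated, and let $\pi_{\mathrm{KG}}$ return the entity nodes of $V$ as $\mathcal{E}$. For relation labels, I would either adopt the convention $\mathcal{R}=\{r:\exists (h,r,t)\in\mathcal{T}\}$ or carry $\mathcal{R}$ as metadata. The second is direction and multiplicity. Since containment edges are undirected, the head/tail order must live in the atom text, not in the graph structure, which is exactly why the faithful-encoding convention is needed. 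Likewise, distinct triples sharing $h,t$ but differing in $r$ produce distinct atoms because of the injectivity of the encoding. With these conventions fixed, the equality follows immediately.
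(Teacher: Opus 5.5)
Your proposal is correct and follows the same route as the paper's proof: $\Phi$ creates one atom $a_{h,r,t}$ per triple, and $\pi_{\mathrm{KG}}$ parses each atom back to its triple, so the recovered triple set is exactly $\mathcal{T}$. The paper simply asserts that this gives back $(\mathcal{E},\mathcal{R},\mathcal{T})$. Your explicit faithful-encoding convention, and your treatment of entities in no triple and unused relation labels (which a projection returning only triples cannot recover), tighten that same argument rather than change it.
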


\begin{proof}
By construction, for each $(h,r,t)\in\mathcal{T}$, $\Phi$ creates exactly one atom $a_{h,r,t}$ encoding proposition $r(h,t)$.
The projection $\pi_{\mathrm{KG}}$ recovers all such triples and no others, hence
$\pi_{\mathrm{KG}}(\Phi(G_{\mathrm{KG}}))=(\mathcal{E},\mathcal{R},\mathcal{T})=G_{\mathrm{KG}}$.
\end{proof}

Lemma~\ref{lem:kg_embeddable} shows that AEG is at least as expressive as the predicate-labeled KG baseline for representing relational assertions.

We now establish strictness by exhibiting information that AEG represents natively (as distinct atomic propositions that preserve local semantic context)
but the predicate-labeled KG baseline cannot represent without extending the formalism beyond predicate-typed edges.

\begin{definition}[Contextual distinguishability]
\label{def:contextual}
A representation is \emph{contextually distinguishable} if it can represent two evidences that share the same relational core (same $(h,r,t)$)
but differ in contextual semantic content (e.g., time, scope, attribution, or discourse-resolved qualifiers) as distinct objects, without collapsing them.
\end{definition}

\begin{lemma}[Predicate-labeled KG is not contextually distinguishable]
\label{lem:kg_not_contextual}
In the predicate-labeled KG baseline, two evidences with the same relational core $(h,r,t)$ are necessarily identified as the same edge assertion,
and thus their contextual distinction is lost unless the model is extended beyond predicate-typed edges.
\end{lemma}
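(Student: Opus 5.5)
The plan is to argue directly from the formal definition of the predicate-labeled KG baseline. In that baseline the set of triples is $\mathcal{T}\subseteq \mathcal{E}\times\mathcal{R}\times\mathcal{E}$, and $G_{\mathrm{KG}}=(\mathcal{E},\mathcal{R},\mathcal{T})$ carries no qualifiers, reification, or provenance. The proof then proceeds in three steps.

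\textbf{Step 1.} I will fix two evidences $\epsilon_1,\epsilon_2$ with the same relational core $(h,r,t)$ but different contextual content $c_1\neq c_2$, in the sense of Definition~\ref{def:contextual}. Any encoding of an evidence into the baseline is a map $\iota$ from evidences to elements of $\mathcal{T}$, or to subsets of $\mathcal{T}$ if one evidence may yield several triples. By assumption the only relational content is $(h,r,t)$. So the part of the encoding that records this assertion satisfies $\iota(\epsilon_1)=\iota(\epsilon_2)=(h,r,t)$.

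\textbf{Step 2.} The key observation is that $\mathcal{T}$ is a \emph{set}. The element $(h,r,t)$ either belongs to $\mathcal{T}$ or does not, and $\mathcal{T}\cup\{(h,r,t)\}\cup\{(h,r,t)\}=\mathcal{T}\cup\{(h,r,t)\}$. Inserting $\epsilon_1$, $\epsilon_2$, or both therefore produces the identical graph. There is also no function on $G_{\mathrm{KG}}$ that can recover whether $c_1$ or $c_2$ was present, because its input is the same in every case. This shows the two evidences collapse into a single edge assertion and the map $\epsilon\mapsto G_{\mathrm{KG}}$ is not injective on contexts.

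\textbf{Step 3.} I will rule out the escape routes. Even if $\mathcal{T}$ is read as a multiset, the two copies of $(h,r,t)$ are indistinguishable tokens, so $c_1$ and $c_2$ still cannot be told apart. Encoding the context $c$ into the entities or relations, for example with a fresh relation label $r_c$ or an auxiliary context node, changes the relational core or introduces reification or qualifier nodes. Both are exactly the extensions beyond predicate-typed edges that the lemma excludes.

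The main obstacle is making the ``unless extended'' clause precise. I would handle it by defining an extension as any encoding in which $\iota(\epsilon)$ is not determined by the core $(h,r,t)$ alone. Under that definition, the contrapositive of Steps 1–2 is exactly the lemma.
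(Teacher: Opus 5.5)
Your proposal is correct and follows essentially the paper's argument: two evidences with the same core land on the same element $(h,r,t)$ of $\mathcal{T}$, and keeping them apart would require reification, qualifier, or provenance objects, which the baseline excludes by definition; your set-idempotence and non-injectivity steps simply make this explicit. One small refinement concerns encodings that add ordinary side triples such as $(h,r',c_i)$ with $c_i\in\mathcal{E}$: these stay inside $\mathcal{E}\times\mathcal{R}\times\mathcal{E}$ yet make your Step~2 ``identical graph'' claim false, so rather than calling them an extension via your closing definition, it is cleaner to note that such triples attach to the node $h$ and not to the edge $(h,r,t)$, so the two evidences still never become distinct objects.
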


\begin{proof}
In the baseline KG, the semantic carrier is the typed edge $(h,r,t)\in\mathcal{E}\times\mathcal{R}\times\mathcal{E}$.
If two evidences share the same $(h,r,t)$, they map to the same element of $\mathcal{T}$.
The baseline structure has no additional components to encode distinct contexts while keeping them distinct, unless one introduces extra objects
(e.g., reification/qualifiers/provenance nodes), which is excluded by definition of the baseline. Hence the baseline is not contextually distinguishable.
\end{proof}

\begin{theorem}[AEG is strictly more comprehensive than predicate-labeled KG]
\label{thm:aeg_strict_comprehensive}
There exists an AEG that is contextually distinguishable, while no predicate-labeled KG baseline can represent the same information
without extending the formalism beyond predicate-typed edges.
\end{theorem}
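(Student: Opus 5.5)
The theorem has two halves. I will prove the existence half by an explicit construction and the non-existence half by invoking Lemma~\ref{lem:kg_not_contextual}.

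\emph{Existence.} Fix entities $h,t\in\mathcal{E}$ and a relation label $r\in\mathcal{R}$. Take two evidences $x_1,x_2$ with the same relational core $(h,r,t)$ that differ in context. For example, $x_1$ states $r(h,t)$ within a time scope $\tau_1$, and $x_2$ states $r(h,t)$ within a different scope $\tau_2\neq\tau_1$. Build an AEG $G^\star=(V,\mathcal{L})$ with atoms $\mathcal{A}=\{a_1,a_2\}$, where $a_i$ is the self-contained statement of $x_i$ including its qualifier, and with $\mathcal{E}(a_1)=\mathcal{E}(a_2)=\{h,t\}$. Its containment edges are exactly $(a_1,h),(a_1,t),(a_2,h),(a_2,t)$.

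Since $a_1$ and $a_2$ are distinct nodes of $V$ and carry distinct atom texts, $G^\star$ represents both evidences as separate objects without collapsing them. This is precisely Definition~\ref{def:contextual}. I will also note that $G^\star$ is a legitimate AEG in the sense of the construction in the main text. Atoms are minimal propositions, and the qualifier belongs to the proposition, so it does not count as an entangled second fact. Optionally, I will observe that the relational core is still recoverable from either atom, as in Definition~\ref{def:aeg_to_kg}, so nothing is lost relative to the KG. Together with Lemma~\ref{lem:kg_embeddable}, this gives the ``at least as expressive'' direction.

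\emph{Non-existence.} Suppose some predicate-labeled KG $(\mathcal{E}',\mathcal{R}',\mathcal{T}')$ represented the same information as $G^\star$. The information of $G^\star$ includes the fact that $x_1$ and $x_2$ are distinct evidences with core $(h,r,t)$. There are two cases.

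\begin{itemize}
\item Both evidences are encoded with the core triple $(h,r,t)$. Then Lemma~\ref{lem:kg_not_contextual} says they map to the same element of $\mathcal{T}'$. Since $\mathcal{T}'$ is a set and not a multiset with identities, the map from evidences to representing objects is not injective, and the distinction between $\tau_1$ and $\tau_2$ is lost.
\item The KG distinguishes them by encoding the context elsewhere. One way is to fold the qualifier into the predicate, e.g.\ relations $r_{\tau_1}\neq r_{\tau_2}$. Then neither evidence is represented as the assertion $r(h,t)$ with the shared core, which contradicts the hypothesis that they share the core $(h,r,t)$. The other way is to introduce an extra node or edge for the context, which is reification, qualifiers, or provenance. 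That is excluded from the baseline by definition.
\end{itemize}

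Hence no baseline KG represents the same information without extending its formalism.

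The main obstacle is the first branch of the second case, where the context is folded into predicate labels. Informally, a KG could seem to ``represent'' the contexts this way. My plan is to rule it out through the precise sense of ``represent the same information'' in Definition~\ref{def:contextual}: the relational core must be preserved as $(h,r,t)$ for both evidences. I will state explicitly that a representation must map each evidence to an object whose relational core equals that evidence's core. Under that requirement, the injectivity argument closes the proof.
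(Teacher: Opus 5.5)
Your proof is correct and follows essentially the same route as the paper: two atoms that share the core $(h,r,t)$ but carry different contexts are distinct AEG nodes, while Lemma~\ref{lem:kg_not_contextual} forces the baseline KG to collapse both into the single triple $(h,r,t)$. Your explicit core-preservation requirement, which rules out folding the context into predicate labels, makes precise what the paper's proof assumes implicitly, and under it your second case reduces to the first anyway (an extra ordinary context triple is not actually excluded by the baseline, but without reification it cannot be bound to either assertion, so the non-injectivity argument still applies).
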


\begin{proof}
Consider two atoms $a_1,a_2\in\mathcal{A}$ that share the same relational core $(h,r,t)$ but differ in contextual semantics:
$a_1$ asserts that $r(h,t)$ holds under context $c_1$, and $a_2$ asserts that $r(h,t)$ holds under a different context $c_2$ with $c_1\neq c_2$
(where contexts may encode time windows, scope, attribution, or any discourse-resolved qualifiers).
By the atom definition in AtomicRAG, each $a_i$ is a minimal self-contained proposition and can be stored as a distinct semantic object in AEG.
Both atoms connect to entities via containment edges, so both remain retrievable and composable.

In contrast, the predicate-labeled KG baseline must collapse both evidences to the same typed edge $(h,r,t)$, losing the distinction between $c_1$ and $c_2$,
by Lemma~\ref{lem:kg_not_contextual}. Therefore the information represented by this AEG cannot be represented in the baseline KG without extending the formalism.
This proves strict comprehensiveness.
\end{proof}

\subsubsection{Robustness: decoupled semantics reduces propagation leakage induced by noisy predicate edges}

\paragraph{Propagation model.}
Let $P$ denote the row-normalized transition matrix used in personalized PageRank (PPR).
Given a personalization vector $\boldsymbol{\pi}$, PPR satisfies
\[
\mathbf{r}=\rho\,\boldsymbol{\pi}+(1-\rho)P^\top\mathbf{r},\qquad \rho\in(0,1).
\]
Partition nodes into a relevant region $R$ and an irrelevant region $I$ (depending on the query).
Consider a two-region macro transition matrix
\[
T=
\begin{pmatrix}
1-\gamma & \gamma\\
\varepsilon & 1-\varepsilon
\end{pmatrix},
\qquad e=(1,0),
\]
where $\gamma$ is the probability of leaving $R$ to $I$ via cross-region edges under the random walk,
and $\varepsilon$ is the probability of returning from $I$ to $R$.

\begin{lemma}[Relevant mass under two-region PPR and monotone leakage]
\label{lem:ppr_leakage_appendix}
Let $\varphi=(\varphi_R,\varphi_I)$ be the stationary distribution over $\{R,I\}$ induced by PPR on the macro chain.
Then
\[
\varphi_R=\frac{\rho+(1-\rho)\varepsilon}{\rho+(1-\rho)(\gamma+\varepsilon)}.
\]
In particular, if $\varepsilon\approx 0$, then
\[
\varphi_R \approx \frac{\rho}{\rho+(1-\rho)\gamma},
\]
and $\varphi_R$ is strictly decreasing in $\gamma$.
\end{lemma}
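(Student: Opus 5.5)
The plan is to treat the macro chain exactly like the PPR fixed-point equation, with $T$ in place of $P$ and $e=(1,0)$ in place of $\boldsymbol{\pi}$. Written as a row vector, the stationary distribution satisfies
\[
\varphi=\rho\, e+(1-\rho)\,\varphi T .
\]
For $\rho\in(0,1)$ this equation has a unique solution with $\varphi_R+\varphi_I=1$. The reason is that $(1-\rho)T$ is substochastic with spectral radius $1-\rho<1$, so $I-(1-\rho)T$ is invertible. I will also note that the solution sums to one: multiplying the equation on the right by $\mathbf{1}$ gives $\varphi\mathbf{1}=\rho+(1-\rho)\varphi\mathbf{1}$, hence $\varphi\mathbf{1}=1$.

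Next I extract the $R$-coordinate and eliminate $\varphi_I$ using $\varphi_I=1-\varphi_R$:
\[
\varphi_R=\rho+(1-\rho)\bigl[(1-\gamma)\varphi_R+\varepsilon(1-\varphi_R)\bigr].
\]
Collecting the $\varphi_R$ terms gives
\[
\varphi_R\bigl[1-(1-\rho)(1-\gamma-\varepsilon)\bigr]=\rho+(1-\rho)\varepsilon .
\]
The bracket simplifies to $\rho+(1-\rho)(\gamma+\varepsilon)$, which yields the stated closed form. This bracket is strictly positive because $\rho>0$, so the division is legitimate. As a check, in the boundary case $\gamma=\varepsilon=0$ the formula gives $\varphi_R=1$, as it should.

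Setting $\varepsilon=0$ in the exact formula gives $\rho/(\rho+(1-\rho)\gamma)$. The formula is continuous in $\varepsilon$, so for $\varepsilon\approx 0$ we get $\varphi_R\approx\rho/(\rho+(1-\rho)\gamma)$, which is the approximate statement.

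For monotonicity, I will differentiate the exact expression in $\gamma$ rather than only the approximation:
\[
\frac{\partial\varphi_R}{\partial\gamma}=-\frac{(1-\rho)\bigl(\rho+(1-\rho)\varepsilon\bigr)}{\bigl(\rho+(1-\rho)(\gamma+\varepsilon)\bigr)^2}.
\]
This is strictly negative because $\rho\in(0,1)$ and $\varepsilon\ge 0$. Hence $\varphi_R$ is strictly decreasing in $\gamma$ for every fixed $\varepsilon$, including $\varepsilon=0$.

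There is no real obstacle here. The only points needing care are fixing the convention (row vector $\varphi$ multiplying $T$, matching $P^\top\mathbf{r}$ in column form) and recording that the denominator is positive, so that the solution is unique and well defined.
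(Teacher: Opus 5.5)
Your proposal is correct and follows the paper's own proof: you set up the macro fixed-point equation $\varphi=\rho e+(1-\rho)\varphi T$ with $\varphi_R+\varphi_I=1$, solve the $R$-coordinate to get the closed form, and differentiate in $\gamma$ to obtain a strictly negative derivative. Your extra steps fill in details the paper leaves implicit. These are uniqueness via invertibility of $\mathrm{Id}-(1-\rho)T$, deriving the normalization from the equation itself, positivity of the denominator, and differentiating the exact rather than the approximate expression.
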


\begin{proof}
The macro PPR fixed-point equation is $\varphi=\rho e+(1-\rho)\varphi T$ with $\varphi_R+\varphi_I=1$.
Solving the first coordinate yields the closed form. Differentiating w.r.t.\ $\gamma$ gives a strictly negative derivative,
hence $\varphi_R$ decreases strictly with $\gamma$.
\end{proof}

Lemma~\ref{lem:ppr_leakage_appendix} shows that robustness of propagation-based retrieval reduces to controlling the cross-region leakage parameter $\gamma$.

\begin{assumption}[Predicate-edge noise induces larger leakage than containment backbone]
\label{assump:noise}
For a fixed corpus and extraction pipeline, spurious predicate-typed edges introduce cross-region transitions at least as often as containment edges do.
Moreover, in AEG, auxiliary entity--entity edges (including association or synonym links) are down-weighted so that their total transition probability
contribution is bounded by a factor $\beta\in(0,1)$ relative to the backbone containment transitions.
\end{assumption}

\begin{theorem}[AEG yields smaller propagation leakage than predicate-labeled KG]
\label{thm:aeg_more_robust}
Under Assumption~\ref{assump:noise}, the effective cross-region leakage parameter induced by AEG satisfies
$\gamma_{\mathrm{AEG}}\le \gamma_{\mathrm{KG}}$ for the predicate-labeled KG baseline.
Consequently, the relevant stationary mass satisfies $\varphi_R^{\mathrm{AEG}}\ge \varphi_R^{\mathrm{KG}}$, with strict inequality when the leakages differ.
\end{theorem}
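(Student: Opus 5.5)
The plan has two parts. First I show that the leakage parameters are ordered, $\gamma_{\mathrm{AEG}}\le\gamma_{\mathrm{KG}}$. Then I transfer that ordering to the relevant stationary mass using Lemma~\ref{lem:ppr_leakage_appendix}.

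\textbf{Leakage comparison.} For a query-induced partition $V=R\cup I$, I write the macro leakage as a mass-weighted average of per-node exit probabilities:
\[
\gamma=\sum_{u\in R}\mu(u)\,P(u,I),
\]
where $\mu$ is the normalized PPR mass restricted to $R$ and $P(u,I)=\sum_{v\in I}P(u,v)$. In the KG baseline, the row of $P$ at an entity $u$ splits uniformly over predicate-typed edges. In the AEG, row normalization splits $u$'s outgoing mass into two parts:
\begin{itemize}
\item a containment part, with total share at least $1/(1+\beta)$;
\item an auxiliary part, with total share at most $\beta/(1+\beta)$, using the down-weighting clause of Assumption~\ref{assump:noise}.
\end{itemize}
By the first clause of Assumption~\ref{assump:noise}, the fraction of containment transitions that cross from $R$ to $I$ is at most the corresponding fraction for predicate edges, which I call $p_{\mathrm{KG}}$. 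The auxiliary edges are entity--entity edges derived from the same triples, so at worst they cross with the same rate $p_{\mathrm{KG}}$. A convex-combination bound then gives, node by node,
\[
P_{\mathrm{AEG}}(u,I)\le \tfrac{1}{1+\beta}\,p_{\mathrm{KG}}+\tfrac{\beta}{1+\beta}\,p_{\mathrm{KG}}=p_{\mathrm{KG}}=P_{\mathrm{KG}}(u,I).
\]
Averaging this node-wise bound over $\mu$ yields $\gamma_{\mathrm{AEG}}\le\gamma_{\mathrm{KG}}$.

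\textbf{Transfer to stationary mass.} I apply Lemma~\ref{lem:ppr_leakage_appendix}. The closed form
\[
\varphi_R=\frac{\rho+(1-\rho)\varepsilon}{\rho+(1-\rho)(\gamma+\varepsilon)}
\]
has numerator independent of $\gamma$ and denominator strictly increasing in $\gamma$, since $1-\rho>0$. Hence $\varphi_R$ is strictly decreasing in $\gamma$ for every fixed $\varepsilon\in[0,1]$. It follows that $\varphi_R^{\mathrm{AEG}}\ge\varphi_R^{\mathrm{KG}}$, with strict inequality exactly when $\gamma_{\mathrm{AEG}}<\gamma_{\mathrm{KG}}$.

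\textbf{Main obstacle.} The return probability $\varepsilon$ differs between the two graphs, yet the comparison above holds it fixed. I would handle this in one of two ways.
\begin{itemize}
\item Adopt the $\varepsilon\approx0$ regime of the lemma, which is natural when $I$ is a large irrelevant region, and compare $\rho/(\rho+(1-\rho)\gamma)$ directly.
\item Add the mild hypothesis $\varepsilon_{\mathrm{AEG}}\ge\varepsilon_{\mathrm{KG}}$. This suffices because $\varphi_R$ is nondecreasing in $\varepsilon$: the derivative of the closed form with respect to $\varepsilon$ has sign $(1-\rho)^2\gamma\ge0$, so raising $\varepsilon$ only helps.
\end{itemize}
A secondary subtlety is that $\mu$ itself depends on the graph. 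Because the bound is node-wise and uniform over $u\in R$, it survives averaging against any distribution on $R$, so this dependence does not break the argument.
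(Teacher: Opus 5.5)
Your overall route is the paper's: get $\gamma_{\mathrm{AEG}}\le\gamma_{\mathrm{KG}}$ from Assumption~\ref{assump:noise}, then apply the monotonicity in Lemma~\ref{lem:ppr_leakage_appendix}. The paper simply asserts the leakage comparison. Your node-by-node derivation of it does not close as written.

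Your choice $\gamma=\sum_{u\in R}\mu(u)P(u,I)$ is the right one: summing the PPR equation over $R$, with restart mass on $R$, makes the lemma's macro equation exact. But it exposes the problem. $\gamma_{\mathrm{AEG}}$ and $\gamma_{\mathrm{KG}}$ are averages against \emph{different} measures $\mu_{\mathrm{AEG}},\mu_{\mathrm{KG}}$ on \emph{different} node sets; $R_{\mathrm{AEG}}$ contains atom nodes, which have no row in the KG at all. A node-wise inequality $f\le g$ transfers only to averages against the same measure. So your remark that the bound ``survives averaging against any distribution on $R$'' holds only if the right-hand side is a single constant. You get that constant by writing $p_{\mathrm{KG}}=P_{\mathrm{KG}}(u,I)$ for every $u$, i.e.\ by silently assuming the KG's exit probability is the same at every node of $R$.

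In fact exit probabilities vary sharply across $R$ (they vanish at interior nodes), and the two PPR vectors weight them differently, so the two averages can be ordered either way. Also, the first clause of the Assumption is an aggregate comparison, and only against \emph{spurious} predicate edges; it is not a per-node bound. To repair this, you can do either of the following.
\begin{itemize}
\item Make the homogeneity, or a condition such as $\sup_{u\in R_{\mathrm{AEG}}}P_{\mathrm{AEG}}(u,I)\le\gamma_{\mathrm{KG}}$, an explicit hypothesis.
\item Read the Assumption directly as a statement about the mass-weighted exit probabilities, which is effectively what the paper does.
\end{itemize}

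The second hole is the auxiliary edges. Synonym edges come from embedding cosine similarity, not from triples, so ``derived from the same triples'' is false for them. The Assumption controls only their \emph{weight} through $\beta$, not how often they cross from $R$ to $I$. Note too that once both parts cross at rate at most $p_{\mathrm{KG}}$, the convex combination is at most $p_{\mathrm{KG}}$ for \emph{any} split. So $\beta$ plays no role in your argument, and the added crossing-rate hypothesis does all the work.

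If you want the down-weighting to matter, let auxiliary edges cross arbitrarily and require a margin on containment instead. With containment crossing rate $c$ at $u$, you get $P_{\mathrm{AEG}}(u,I)\le c+\frac{\beta}{1+\beta}(1-c)=\frac{c+\beta}{1+\beta}$. This is at most $p_{\mathrm{KG}}$ iff $c\le(1+\beta)p_{\mathrm{KG}}-\beta$. The paper's proof has the same gap: a contribution bounded by a factor $\beta$ does not by itself keep the total below the KG leakage.

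Your treatment of $\varepsilon$, by contrast, improves on the paper. Its proof silently holds $\varepsilon$ fixed across the two graphs. The theorem's strictness claim genuinely needs either your $\varepsilon\approx0$ regime or your hypothesis $\varepsilon_{\mathrm{AEG}}\ge\varepsilon_{\mathrm{KG}}$. Your sign computation $(1-\rho)^2\gamma\ge0$ for $\partial\varphi_R/\partial\varepsilon$ is correct.
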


\begin{proof}
In the predicate-labeled KG baseline, transitions between entity nodes are directly realized by predicate-typed edges.
Extraction errors may create spurious cross-region edges, which contribute fully to the random-walk probability of leaving $R$, increasing $\gamma_{\mathrm{KG}}$.

In AEG, the backbone transitions are mediated by containment edges:
a transition from an entity $e$ to an atom $a$ requires $e\in\mathcal{E}(a)$ and a transition from $a$ to another entity $e'$ requires $e'\in\mathcal{E}(a)$.
These transitions are structurally constrained by mention locality, and by Assumption~\ref{assump:noise} their propensity to induce cross-region jumps
is no larger than that of spurious predicate edges. Auxiliary entity--entity edges are down-weighted so their contribution is bounded by $\beta$.
Therefore the total probability mass assigned to leaving $R$ in AEG is no larger than that in the predicate-labeled KG baseline, implying
$\gamma_{\mathrm{AEG}}\le \gamma_{\mathrm{KG}}$.

Finally, Lemma~\ref{lem:ppr_leakage_appendix} shows $\varphi_R$ decreases monotonically with $\gamma$,
hence $\varphi_R^{\mathrm{AEG}}\ge \varphi_R^{\mathrm{KG}}$, with strict inequality when $\gamma_{\mathrm{AEG}}<\gamma_{\mathrm{KG}}$.
\end{proof}

Combining Theorem~\ref{thm:aeg_strict_comprehensive} (strict comprehensiveness) and Theorem~\ref{thm:aeg_more_robust} (propagation robustness)
establishes Proposition~1.

\subsection{Proof of Proposition 2: Granularity alignment facilitates retrieval}
\label{proof2}

This section proves a two-sided granularity mismatch principle and shows why AtomicRAG's atom-level storage and query decomposition
move retrieval into a favorable regime.

\subsubsection{Unified formalization: retrieval as ranking over evidence sets}

Let $\mathcal{A}$ denote the universe of atomic evidence items (atoms in AtomicRAG).
For a query $q$, assume there exists a minimal sufficient evidence set $A^{\ast}(q)\subseteq \mathcal{A}$ necessary to support the correct answer,
and denote $m:=|A^{\ast}(q)|$.
A retrieval unit $U$ (e.g., chunk, subgraph, community, path, triple, or atom) serializes into an evidence set $C(U)\subseteq\mathcal{A}$.
Define
\[
r(U):=|A^{\ast}(q)\cap C(U)|,\qquad M(U):=|C(U)|.
\]
Define coverage and purity:
\[
\mathrm{Cov}(q,U)=\frac{r(U)}{m},\qquad \mathrm{Pur}(q,U)=\frac{r(U)}{M(U)}.
\]

\subsubsection{Coarse units dilute: separation scales with purity and misranking worsens with set size}

Assume an atom-level scoring model $s(q,a)$:
\[
s(q,a)=\mu_{Y(a)}+\varepsilon_a,\qquad Y(a)=\mathbf{1}[a\in A^{\ast}(q)],
\]
where $\mu_1>\mu_0$, $\Delta\mu=\mu_1-\mu_0>0$, and $\varepsilon_a$ are i.i.d.\ $\sigma$-sub-Gaussian.
To rank units, use mean aggregation:
\[
S(q,U)=\frac{1}{M(U)}\sum_{a\in C(U)} s(q,a).
\]

\begin{lemma}[Expected score gap equals purity-scaled signal]
\label{lem:gap_expect_appendix}
Let $U^+$ have $r:=r(U^+)\ge 1$ and $M:=M(U^+)$, and let $U^-$ satisfy $r(U^-)=0$ and $M(U^-)=M$.
Then
\[
\mathbb{E}[S(q,U^+)]-\mathbb{E}[S(q,U^-)]=\frac{r}{M}\Delta\mu.
\]
\end{lemma}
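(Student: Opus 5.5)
The plan is a direct computation via linearity of expectation. The noise terms are centered and cancel, so each unit's expected score is a weighted average of $\mu_1$ and $\mu_0$, with weights given by the numbers of relevant and irrelevant atoms it contains.

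First fix the query $q$ and treat the unit contents $C(U^+)$ and $C(U^-)$ as deterministic sets. Under the scoring model, each atom satisfies $\mathbb{E}[s(q,a)]=\mu_{Y(a)}$, because the noise $\varepsilon_a$ is $\sigma$-sub-Gaussian and hence mean zero. This is the one point worth stating explicitly: centering is part of the standard sub-Gaussian definition, and I would note that this is the convention being used.

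Next, by linearity of expectation applied to the mean aggregation,
\[
\mathbb{E}[S(q,U^+)]=\frac{1}{M}\sum_{a\in C(U^+)}\mu_{Y(a)}.
\]
By definition of $r(U^+)$, exactly $r$ atoms in $C(U^+)$ have $Y(a)=1$, and the remaining $M-r$ have $Y(a)=0$. This gives
\[
\mathbb{E}[S(q,U^+)]=\frac{r\mu_1+(M-r)\mu_0}{M}=\mu_0+\frac{r}{M}\Delta\mu .
\]
The same argument applies to $U^-$. Since $r(U^-)=0$ and $M(U^-)=M$, every atom in $C(U^-)$ contributes $\mu_0$, so $\mathbb{E}[S(q,U^-)]=\mu_0$.

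Subtracting the two expressions yields $\frac{r}{M}\Delta\mu=\mathrm{Pur}(q,U^+)\,\Delta\mu$, which is the claim. It can also be read as purity scaling the separation.

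There is no real obstacle here. The only care needed is to make the zero-mean noise assumption explicit, and to note that the result does not need independence: only linearity of expectation is used. Independence becomes relevant only later, for the concentration and misranking bounds.
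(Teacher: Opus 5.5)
Your proof is correct and matches the paper's argument. You expand the mean aggregation, use linearity of expectation with the zero-mean noise to get $\mathbb{E}[S(q,U^+)]=\mu_0+\frac{r}{M}\Delta\mu$ and $\mathbb{E}[S(q,U^-)]=\mu_0$, and subtract; your remarks that centering of $\varepsilon_a$ must be assumed and that independence is not needed here are accurate refinements of the paper's one-line sketch.
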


\begin{proof}
Expanding $S(q,U)$ and taking expectations removes the zero-mean noise terms, leaving the difference in means on necessary atoms,
scaled by the fraction $r/M$ of necessary atoms contained in the unit.
\end{proof}

\begin{theorem}[Misranking probability bound degrades with coarse evidence sets]
\label{thm:misrank_bound_appendix}
Under the same setting as Lemma~\ref{lem:gap_expect_appendix},
\[
\mathbb{P}\big(S(q,U^-)\ge S(q,U^+)\big)\le \exp\!\left(-\frac{r^2(\Delta\mu)^2}{4\sigma^2\,M}\right).
\]
\end{theorem}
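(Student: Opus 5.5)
The plan is to write the score difference $S(q,U^-)-S(q,U^+)$ as a sum of independent centered sub-Gaussian terms shifted by a negative mean, then apply a Chernoff (Hoeffding-type) tail bound. The setup is the same as in Lemma~\ref{lem:gap_expect_appendix}, and I take $C(U^+)$ and $C(U^-)$ to be disjoint sets of atoms, so that all noise variables $\varepsilon_a$ involved are independent. This is the natural reading: $U^-$ contains no necessary atoms, while $U^+$ does.

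\textbf{Step 1: decompose into mean and noise.} Define
\[
D:=S(q,U^-)-S(q,U^+)=\frac{1}{M}\sum_{a\in C(U^-)}\varepsilon_a-\frac{1}{M}\sum_{a\in C(U^+)}\varepsilon_a-\frac{r}{M}\Delta\mu .
\]
The deterministic part comes directly from Lemma~\ref{lem:gap_expect_appendix}. The event of misranking is $\{D\ge 0\}$, which is the event that the centered noise $Z:=D+\frac{r}{M}\Delta\mu$ is at least $t:=\frac{r}{M}\Delta\mu$.

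\textbf{Step 2: sub-Gaussian parameter of $Z$.} The variable $Z$ is a sum of $2M$ independent terms $\pm\varepsilon_a/M$. Each term is $(\sigma/M)$-sub-Gaussian, meaning its moment generating function is bounded by $\exp(\lambda^2\sigma^2/(2M^2))$. Multiplying the moment generating functions gives
\[
\mathbb{E}e^{\lambda Z}\le \exp\!\big(\lambda^2\cdot 2M\cdot\sigma^2/(2M^2)\big)=\exp\!\big(\lambda^2\sigma^2/M\big).
\]
So $Z$ is sub-Gaussian with variance proxy $2\sigma^2/M$.

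\textbf{Step 3: Chernoff.} Markov's inequality applied to $e^{\lambda Z}$, optimized at $\lambda=tM/(2\sigma^2)$, gives
\[
\mathbb{P}(Z\ge t)\le \exp\!\big(-t^2M/(4\sigma^2)\big).
\]
Substituting $t=r\Delta\mu/M$ yields $\exp\!\big(-r^2(\Delta\mu)^2/(4\sigma^2 M)\big)$, which is the claimed bound.

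\textbf{Main obstacle.} The only delicate point is justifying independence, which requires that the two units share no atoms. If they overlap, the shared atoms cancel in $D$, so fewer independent terms remain and the variance proxy only decreases. The bound therefore still holds, and I would add a remark saying so. I would also fix the sub-Gaussian convention as $\mathbb{E}e^{\lambda\varepsilon}\le e^{\lambda^2\sigma^2/2}$, since the constant $4$ in the exponent depends on that convention.
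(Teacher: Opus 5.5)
Your proposal is correct and follows the same route as the paper: split the score difference into the deterministic gap $\frac{r}{M}\Delta\mu$ plus a difference of two noise averages, observe that this noise is sub-Gaussian with variance proxy $2\sigma^2/M$, and apply the standard sub-Gaussian (Chernoff) tail bound. You also make explicit two details that the paper's appeal to ``sub-Gaussian closure'' leaves implicit: the independence requirement (disjoint $C(U^+)$ and $C(U^-)$, where any shared atoms cancel and only shrink the variance proxy) and the MGF convention that fixes the constant $4$.
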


\begin{proof}
Let $D=S(q,U^+)-S(q,U^-)=\frac{r}{M}\Delta\mu+X$ where $X$ is the difference of two independent averages of sub-Gaussian noises.
By sub-Gaussian closure, $X$ is $(\sqrt{2}\sigma/\sqrt{M})$-sub-Gaussian.
A standard tail bound yields
\[
\mathbb{P}(D\le 0)=\mathbb{P}\!\left(X\le -\frac{r}{M}\Delta\mu\right)\le \exp\!\left(-\frac{r^2(\Delta\mu)^2}{4\sigma^2\,M}\right).
\]
\end{proof}

Theorem~\ref{thm:misrank_bound_appendix} shows that coarse retrieval units (large $M$ with small $r$) incur weak separation and unstable ranking.

\subsubsection{Fine units fail by coverage limits under top-$k$}

\begin{theorem}[Coverage upper bound for overly fine units under top-$k$]
\label{thm:coverage_upper_appendix}
Assume $|C(U)|\le c$ for every retrieval unit $U$ (e.g., $c=1$ for triple-level units).
For any top-$k$ set $\{U_1,\ldots,U_k\}$,
\[
\left|A^{\ast}(q)\cap \bigcup_{i=1}^k C(U_i)\right|\le kc,
\qquad
\mathrm{Cov}\big(q,\{U_i\}_{i=1}^k\big)\le \frac{kc}{m}.
\]
In particular, if $kc<m$, then full coverage is impossible regardless of ranking quality.
\end{theorem}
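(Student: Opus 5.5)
The plan is a pigeonhole count followed by division by $m$. The statement only uses the per-unit cap $|C(U)|\le c$; the scoring model and the ranking play no role. The bound therefore holds for any top-$k$ set, however it was chosen.

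\emph{Step 1: bound the union.} The set of necessary atoms recovered by the top-$k$ units satisfies
\[
A^{\ast}(q)\cap \bigcup_{i=1}^k C(U_i)\;\subseteq\;\bigcup_{i=1}^k C(U_i).
\]
By subadditivity of cardinality under finite unions,
\[
\Bigl|\bigcup_{i=1}^k C(U_i)\Bigr|\le \sum_{i=1}^k |C(U_i)| = \sum_{i=1}^k M(U_i)\le kc,
\]
using the hypothesis $M(U_i)\le c$ for each $i$. Monotonicity of cardinality under inclusion then gives the first inequality. A slightly sharper intermediate form is $\sum_i r(U_i)\le kc$, obtained from $r(U_i)\le M(U_i)$. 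I would record it because it links directly to the coverage and purity definitions above.

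\emph{Step 2: normalize.} Coverage of a set of units has not been formally defined for collections, so I would extend $\mathrm{Cov}$ to a collection in the natural way:
\[
\mathrm{Cov}\bigl(q,\{U_i\}_{i=1}^k\bigr):=\frac{\bigl|A^{\ast}(q)\cap\bigcup_i C(U_i)\bigr|}{m}.
\]
Dividing Step 1 by $m>0$ gives $\mathrm{Cov}\le kc/m$. This step is where care is needed, since the bound depends on the definition chosen. I would check that for $k=1$ it agrees with the single-unit $\mathrm{Cov}(q,U)=r(U)/m$.

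\emph{Step 3: the corollary.} If $kc<m$, then $\mathrm{Cov}<1$, so some atom of $A^{\ast}(q)$ is missing from every retrieved unit. The argument never used the order or the scores of the units, so the conclusion holds regardless of ranking quality. Combined with Theorem~\ref{thm:misrank_bound_appendix}, this gives the two-sided mismatch: units that are too coarse fail through dilution, and units that are too fine fail through coverage.
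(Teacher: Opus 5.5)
Your proof is correct and matches the paper's argument: the inclusion into $\bigcup_i C(U_i)$, then subadditivity $\left|\bigcup_i C(U_i)\right|\le\sum_i |C(U_i)|\le kc$, then division by $m$. Your explicit definition of coverage for a collection of units, and your remark that $kc<m$ forces $\mathrm{Cov}<1$ independently of the ranking, make precise what the paper leaves implicit.
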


\begin{proof}
Since $A^{\ast}(q)\cap \cup_{i=1}^k C(U_i)\subseteq \cup_{i=1}^k C(U_i)$,
\[
\left|A^{\ast}(q)\cap \bigcup_{i=1}^k C(U_i)\right|
\le \left|\bigcup_{i=1}^k C(U_i)\right|
\le \sum_{i=1}^k |C(U_i)|\le kc.
\]
Dividing by $m$ yields the bound on coverage.
\end{proof}

\subsubsection{Why AtomicRAG improves the regime: atom-level units plus query decomposition}

AtomicRAG aligns granularity by (i) using self-contained atoms as semantic carriers, avoiding coarse-unit dilution, and
(ii) decomposing complex queries into a small set of atomic sub-queries, reducing the effective evidence demand per retrieval instance.

Let the effective query set be $\widetilde{\mathcal{Q}}(q)$ as defined in the main text.
For LaTeX robustness, we introduce a shorthand:
\[
\widetilde{\mathcal{Q}}_q := \widetilde{\mathcal{Q}(q)}.
\]
For each $q'\in \widetilde{\mathcal{Q}}_q$, let $A^{*}(q')$ denote the minimal sufficient evidence set with size $m_{q'}=|A^{*}(q')|$.
Define the overall target evidence as
\[
A^{*}_{\mathrm{all}}(q)=\bigcup_{q'\in\widetilde{\mathcal{Q}}_q} A^{*}(q').
\]

\begin{corollary}[Decomposition relaxes top-$k$ coverage constraints]
\label{cor:decomp_coverage}
Suppose retrieval is performed independently for each $q'\in\widetilde{\mathcal{Q}}_q$ with the same top-$k$ budget and unit-size bound $|C(U)|\le c$.
Then full coverage for each sub-query requires only $kc\ge m_{q'}$ rather than $kc\ge |A^{*}_{\mathrm{all}}(q)|$.
Thus, when $\max_{q'\in\widetilde{\mathcal{Q}}_q} m_{q'} \ll |A^{*}_{\mathrm{all}}(q)|$, decomposition enlarges the feasible region of full-coverage retrieval.
\end{corollary}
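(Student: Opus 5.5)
The plan is to apply Theorem~\ref{thm:coverage_upper_appendix} separately to each retrieval instance $q'\in\widetilde{\mathcal{Q}}_q$, and then compare the resulting feasibility condition with the one for a single undecomposed retrieval.

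\textbf{Per-instance bound.} Fix $q'\in\widetilde{\mathcal{Q}}_q$. Because the retrieval for $q'$ is an independent top-$k$ retrieval over units with $|C(U)|\le c$, the theorem holds with $A^*(q)$ replaced by $A^*(q')$ and $m$ replaced by $m_{q'}$. This gives
\[
\mathrm{Cov}\big(q',\{U_i\}_{i=1}^k\big)\le kc/m_{q'}.
\]
Hence full coverage for $q'$ is feasible only if $kc\ge m_{q'}$. Conversely, if $kc\ge m_{q'}$, one can choose units $U_1,\dots,U_k$ whose evidence sets cover $A^*(q')$: split $A^*(q')$ into at most $\lceil m_{q'}/c\rceil\le k$ groups of size at most $c$. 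Under the implicit assumption that such units exist (for atoms, $c=1$ and each atom is its own unit), this shows the condition is necessary and sufficient. So the per-sub-query requirement is exactly $kc\ge m_{q'}$.

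\textbf{Comparison with the undecomposed case.} Next, consider a single retrieval that must cover $A^*_{\mathrm{all}}(q)$ with one top-$k$ budget. The same theorem applied with $m=|A^*_{\mathrm{all}}(q)|$ requires $kc\ge |A^*_{\mathrm{all}}(q)|$. Since $A^*(q')\subseteq A^*_{\mathrm{all}}(q)$, we have $m_{q'}\le |A^*_{\mathrm{all}}(q)|$ for every $q'$, so
\[
\{(k,c): kc\ge |A^*_{\mathrm{all}}(q)|\}\subseteq\{(k,c): kc\ge \max_{q'} m_{q'}\}.
\]
The feasible region therefore never shrinks. When $\max_{q'} m_{q'}\ll |A^*_{\mathrm{all}}(q)|$, it strictly grows: every budget with $\max_{q'}m_{q'}\le kc<|A^*_{\mathrm{all}}(q)|$ is feasible after decomposition and infeasible without it. The union over sub-queries then covers $A^*_{\mathrm{all}}(q)$, because each $A^*(q')$ is covered by its own top-$k$ set, and the merge $\mathcal{R}(q)=\bigcup_{q'}\mathcal{R}(q')$ keeps all of them.

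\textbf{Main obstacle.} The difficult part is not the arithmetic but pinning down what is being compared. The statement relies on two modelling points. First, the per-sub-query budgets add up to a total of $|\widetilde{\mathcal{Q}}_q|\,k$ retrieved units, which is what makes the comparison non-trivial; I would state this explicitly. Second, the sufficiency direction needs an existence assumption on units, so I would present only the necessity bound as the theorem-backed claim and treat sufficiency as holding for atom-level units with $c=1$.
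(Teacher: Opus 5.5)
Your proposal is correct and follows the paper's own argument. Like the paper, you apply Theorem~\ref{thm:coverage_upper_appendix} to each $q'\in\widetilde{\mathcal{Q}}_q$ separately to obtain $kc\ge m_{q'}$, then compare this with the single-pass threshold $kc\ge|A^{*}_{\mathrm{all}}(q)|$ via $m_{q'}\le|A^{*}_{\mathrm{all}}(q)|$; your sufficiency direction, explicit region inclusion, and budget accounting are elaborations beyond the paper's two-line proof.

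On your ``what is being compared'' point, note also that $q$ itself always lies in $\widetilde{\mathcal{Q}}_q$, so $\max_{q'}m_{q'}\ge m_q=|A^{*}(q)|$. The enlargement therefore holds only relative to covering the whole union $A^{*}_{\mathrm{all}}(q)$ in one pass, not relative to an undecomposed retrieval that need only cover $A^{*}(q)$.
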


\begin{proof}
Apply Theorem~\ref{thm:coverage_upper_appendix} to each sub-query $q'$ separately.
The necessary condition for full coverage becomes $kc\ge m_{q'}$ for each $q'$.
If decomposition reduces the maximum evidence demand per sub-query, the constraint is relaxed accordingly.
\end{proof}

Finally, because atoms are defined as minimal self-contained propositions, AtomicRAG can keep $M(U)$ small for each semantic unit while maintaining non-trivial
overlap $r(U)$ for relevant units. This increases purity $r(U)/M(U)$ and strengthens the misranking exponent in
Theorem~\ref{thm:misrank_bound_appendix}. This establishes Proposition~2.


\section{Case Study}
\label{app:case}
\begin{figure}[t]
  \centering
  \includegraphics[page=1,width=\linewidth]{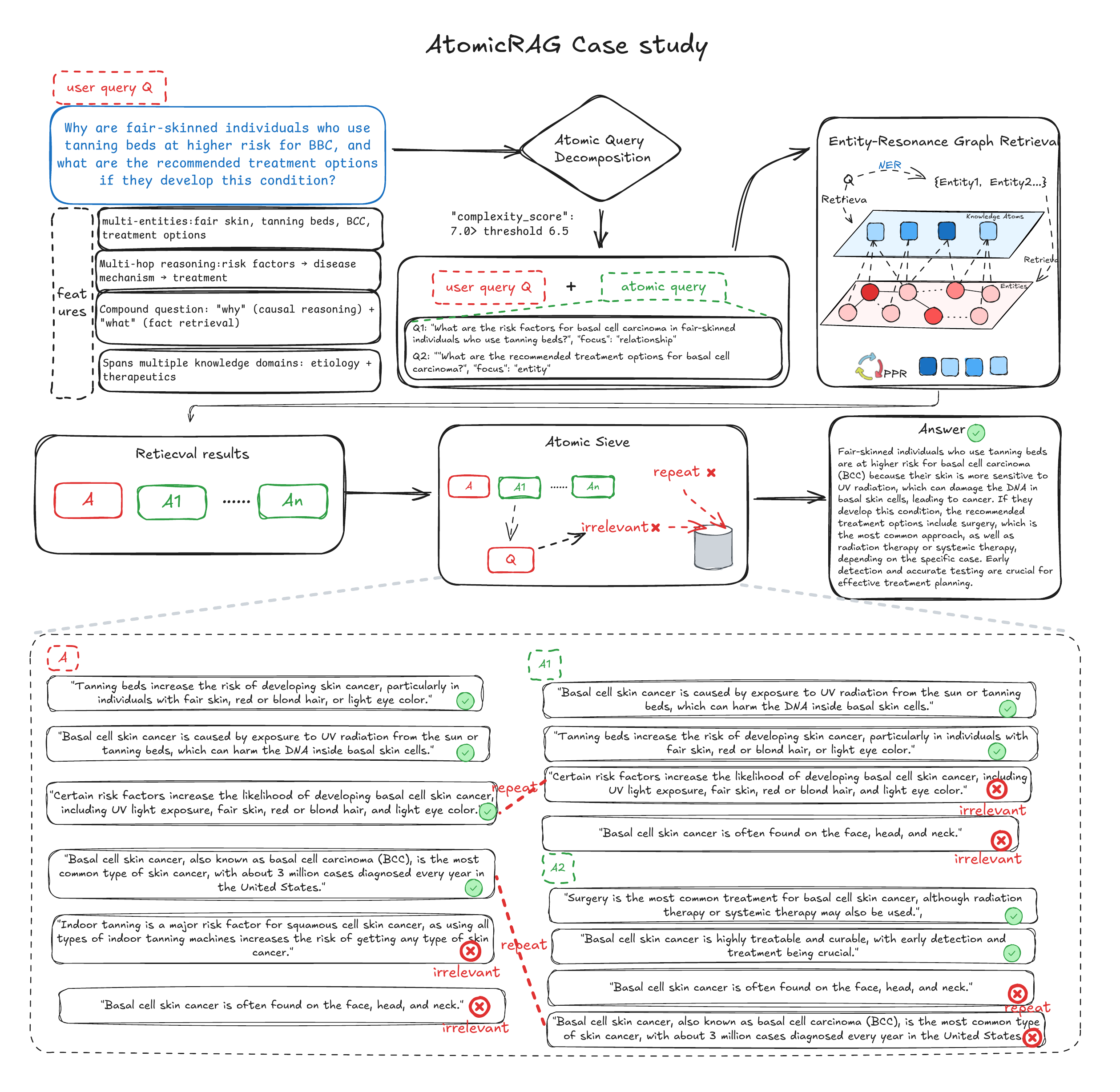}
  \caption{Case study.}
  \label{fig:case-study}
\end{figure}
Figure~\ref{fig:case-study} illustrates a compound user query that jointly asks for a causal explanation (``why'') and actionable recommendations (``what'') in a setting with multiple salient entities (fair skin, tanning beds, basal cell carcinoma (BCC), treatment options) and an implicit multi-hop chain (risk factors $\rightarrow$ mechanism $\rightarrow$ treatment). In a standard chunk-based RAG pipeline, a single retrieval pass tends to aggregate heterogeneous evidence from etiology and therapeutics into one context window, which amplifies redundancy and introduces topic drift (e.g., prevalence statistics or anatomical distribution), ultimately weakening causal grounding and obscuring the treatment-centric part of the question.

AtomicRAG addresses this failure mode by explicitly separating \emph{knowledge indexing} (which evidence belongs to which sub-intent) from \emph{knowledge representation} (how evidence units are encoded and deduplicated). The system first estimates query complexity; in this example, the score exceeds the decomposition threshold (7.0 $>$ 6.5), triggering \emph{Atomic Question Decomposition}. The query is split into two atomic sub-queries with distinct foci: $Q_1$ targets the \emph{relationship} between tanning bed use, fair skin, and BCC risk; $Q_2$ targets \emph{entity-centric} treatment options for BCC. This decomposition prevents cross-domain interference by construction: etiological evidence is retrieved and synthesized for $Q_1$, while therapeutic evidence is retrieved and synthesized for $Q_2$.

For each atomic query, \emph{Entity-Resonance Graph Retrieval} performs NER to obtain an entity set and then runs a graph-based propagation (e.g., Personalized PageRank) over an entity--atom graph to surface candidate \emph{knowledge atoms}. Compared to chunk retrieval, atom-level evidence units improve controllability during downstream aggregation because they are both smaller (reducing irreducible noise) and explicitly indexed by entities (supporting compositional evidence tracing). However, graph retrieval can still surface near-duplicates and weakly related atoms. AtomicRAG therefore applies an \emph{Atomic Sieve} that (i) deduplicates repeated atoms (e.g., multiple paraphrases stating that indoor tanning elevates skin cancer risk in fair-skinned individuals), and (ii) filters atoms that are off-target for the current sub-query (e.g., incidence counts, common anatomical sites, or statements about other skin cancer types). After sieving, the retained atoms for $Q_1$ concentrate on the causal pathway linking UV exposure from tanning beds to DNA damage in basal cells and heightened susceptibility in fair skin, while the retained atoms for $Q_2$ focus on treatment decisions (surgery as the common first-line option, with radiation or systemic therapy depending on case factors) and the role of early detection/testing for planning.

Finally, the generator composes the response by \emph{fusing} the two curated atom sets, yielding an answer that cleanly separates causal justification from treatment recommendations while maintaining an explicit evidence chain. This example highlights the core benefit of AtomicRAG in practice: by representing evidence as atomic facts and organizing retrieval around decomposed intents, the system reduces redundancy, limits domain leakage, and stabilizes multi-hop reasoning under compound query objectives.

\section{Prompt Templates}
\label{app:prompts}

AtomicRAG relies on a small set of prompt templates that operationalize (i) corpus-to-structure construction, (ii) query decomposition and evidence selection, and (iii) answer generation and evaluation. To keep the appendix readable, we only summarize the role of each prompt family here, and provide the full templates verbatim as figures for exact reproducibility (see Figures~\ref{fig:prompt-ner}--\ref{fig:prompt-qa2}).

\paragraph{Named Entity Recognition.}
We use an entity extraction prompt to identify salient named entities in each passage. The output is a structured JSON list, which is then reused by downstream extraction prompts to encourage entity-grounded triples and fragments ( Figure~\ref{fig:prompt-ner}).

\paragraph{Unified Triple \& Knowledge Atom Extraction.}
This prompt jointly extracts (a) RDF-style triples and (b) self-contained knowledge atoms from the same passage, with explicit constraints to resolve coreference, preserve quantities/time spans, and avoid redundant fragments. It returns a single JSON object containing triples, atoms, and atom-level entity mentions ( Figure~\ref{fig:prompt-unti}).

\paragraph{Atomic Question Decomposition.}
We adopt a single-call decomposition prompt that first scores question complexity and then (only when needed) produces a small set of atomic sub-questions with focus tags. This ensures decomposition is used conservatively and remains retrieval-actionable (Figure~\ref{fig:prompt-query}).

\paragraph{Knowledge Atom Filtering.}
Given the user question and a candidate set of knowledge atoms, the filter prompt selects only the atom IDs that are directly relevant for answering the question. The output is restricted to an index list in JSON, preventing the model from inventing new evidence (Figure~\ref{fig:prompt-filter}).

\paragraph{Abstract QA and precise QA.}
We use two reading-comprehension prompts for answer synthesis: (i) an \emph{abstract} QA prompt that produces a complete, self-contained answer with brief evidence-based reasoning, and (ii) a \emph{precise} QA prompt that outputs a concise final answer when the benchmark expects short-form responses (Figures~\ref{fig:prompt-qa} and \ref{fig:prompt-qa2}).
\section{Limitations}
A limitation of AtomicRAG is that its effectiveness depends on the quality and consistency of the offline atom–entity graph construction and the online query decomposition. In particular, the atomization step and entity canonicalization are typically produced by an instruction-tuned LLM (or an automated pipeline), so variations in extraction quality seeps into the downstream graph connectivity and retrieval neighborhoods. Likewise, when atomic question decomposition is enabled, the decomposition granularity and sub-question coverage can influence how well the subsequent propagation and sieve steps identify the right evidence chain. While our design reduces reliance on predicate-labeled edges and is generally robust in noisy settings, it may be less expressive for tasks that require strict relation semantics (e.g., fine-grained temporal/causal constraints), where additional relation-aware signals could further help. Finally, AtomicRAG introduces extra system components (graph storage, indexing, and optional decomposition), and although we report efficiency results, scaling to continuously updated corpora (frequent insertions/deletions) may require additional engineering for incremental updates and stability.

\begin{figure}[t]
  \centering
  \includegraphics[page=1,width=\linewidth]{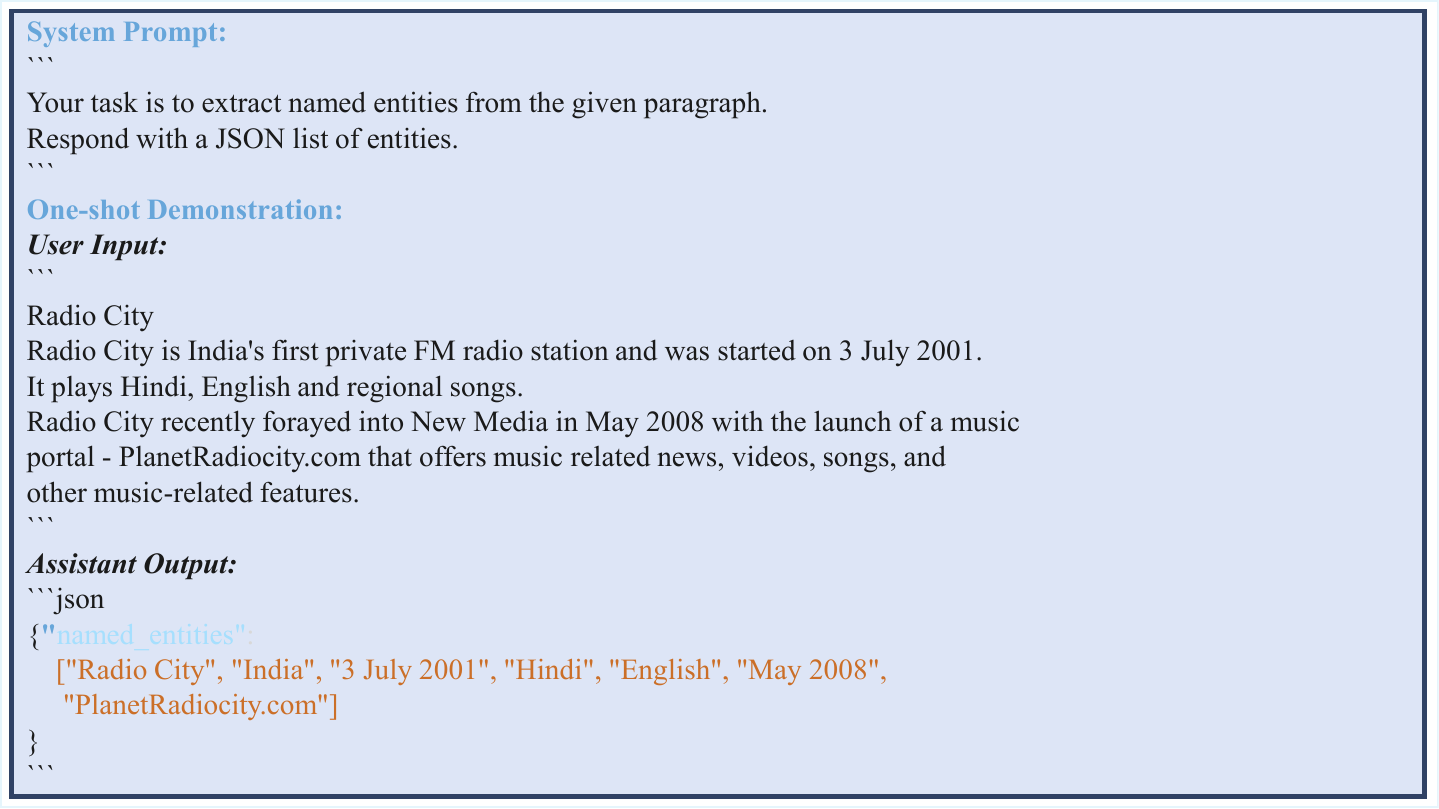}
  \caption{Prompt template for named entity recognition (NER).}
  \label{fig:prompt-ner}
\end{figure}

\begin{figure}[t]
  \centering
  \includegraphics[page=1,width=\linewidth]{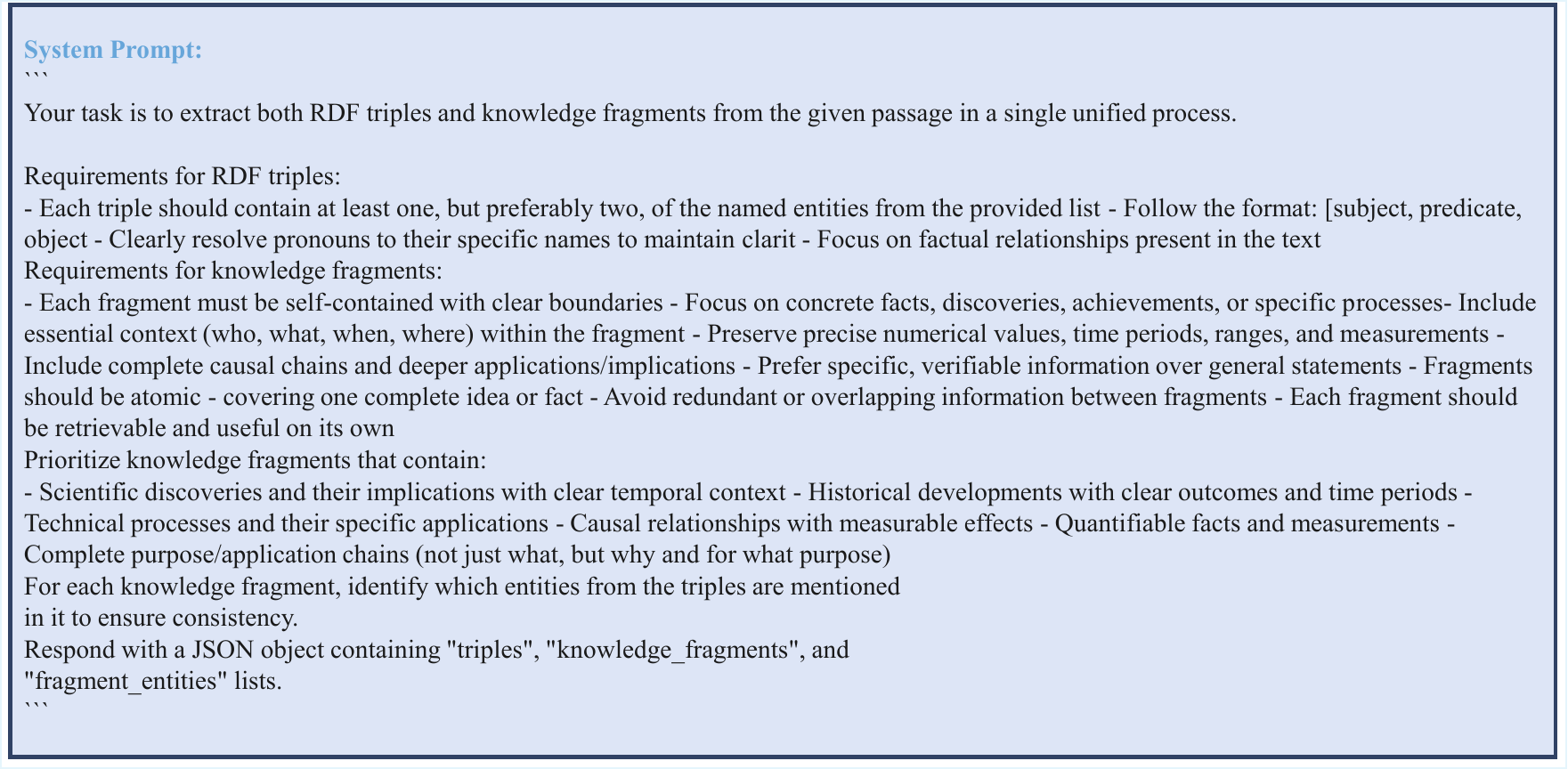}
  \caption{Prompt template for unified triple and knowledge atom extraction.}
  \label{fig:prompt-unti}
\end{figure}

\begin{figure}[t]
  \centering
  \includegraphics[page=1,width=\linewidth]{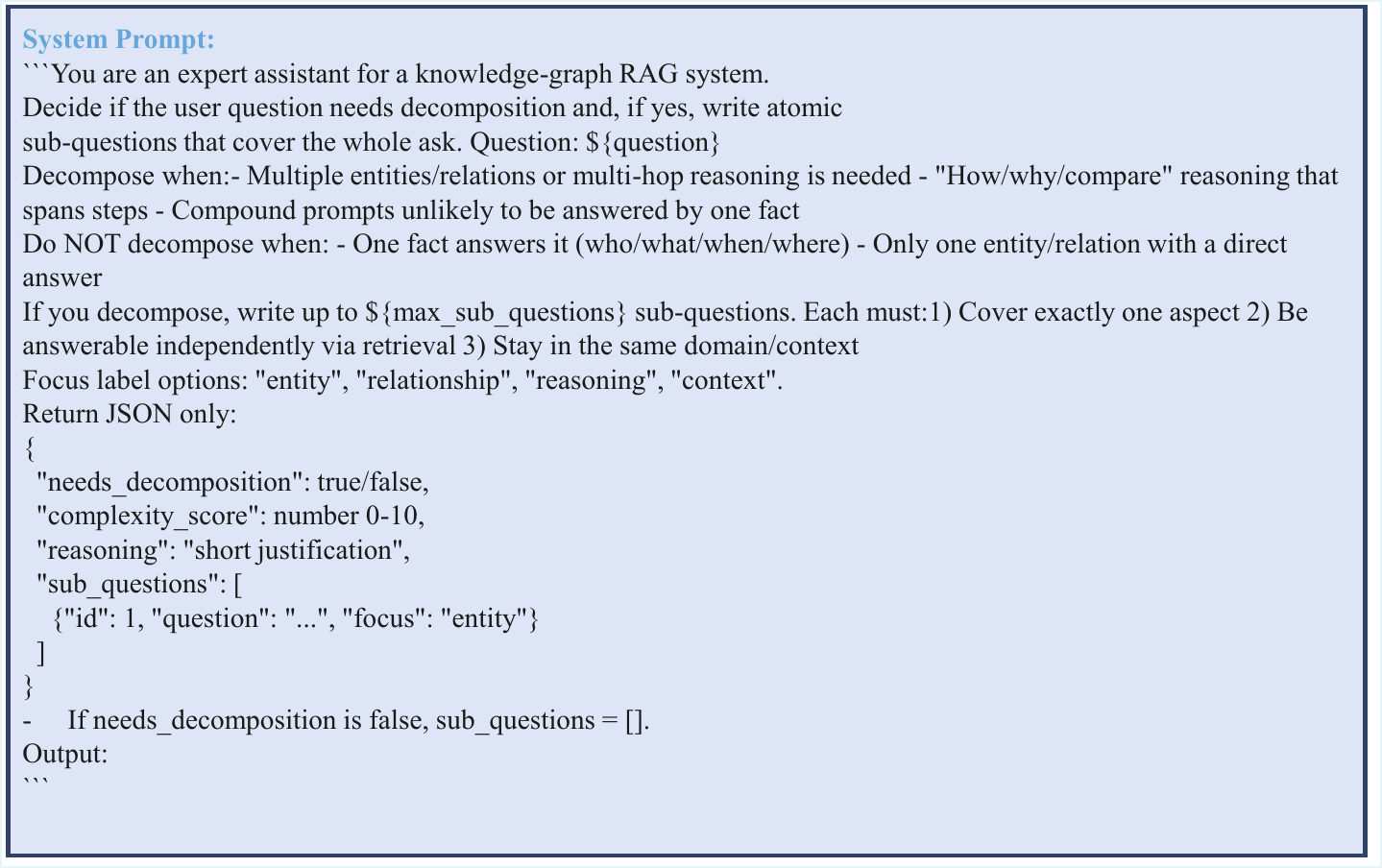}
  \caption{Prompt template for question complexity scoring and atomic decomposition.}
  \label{fig:prompt-query}
\end{figure}

\begin{figure}[t]
  \centering
  \includegraphics[page=1,width=\linewidth]{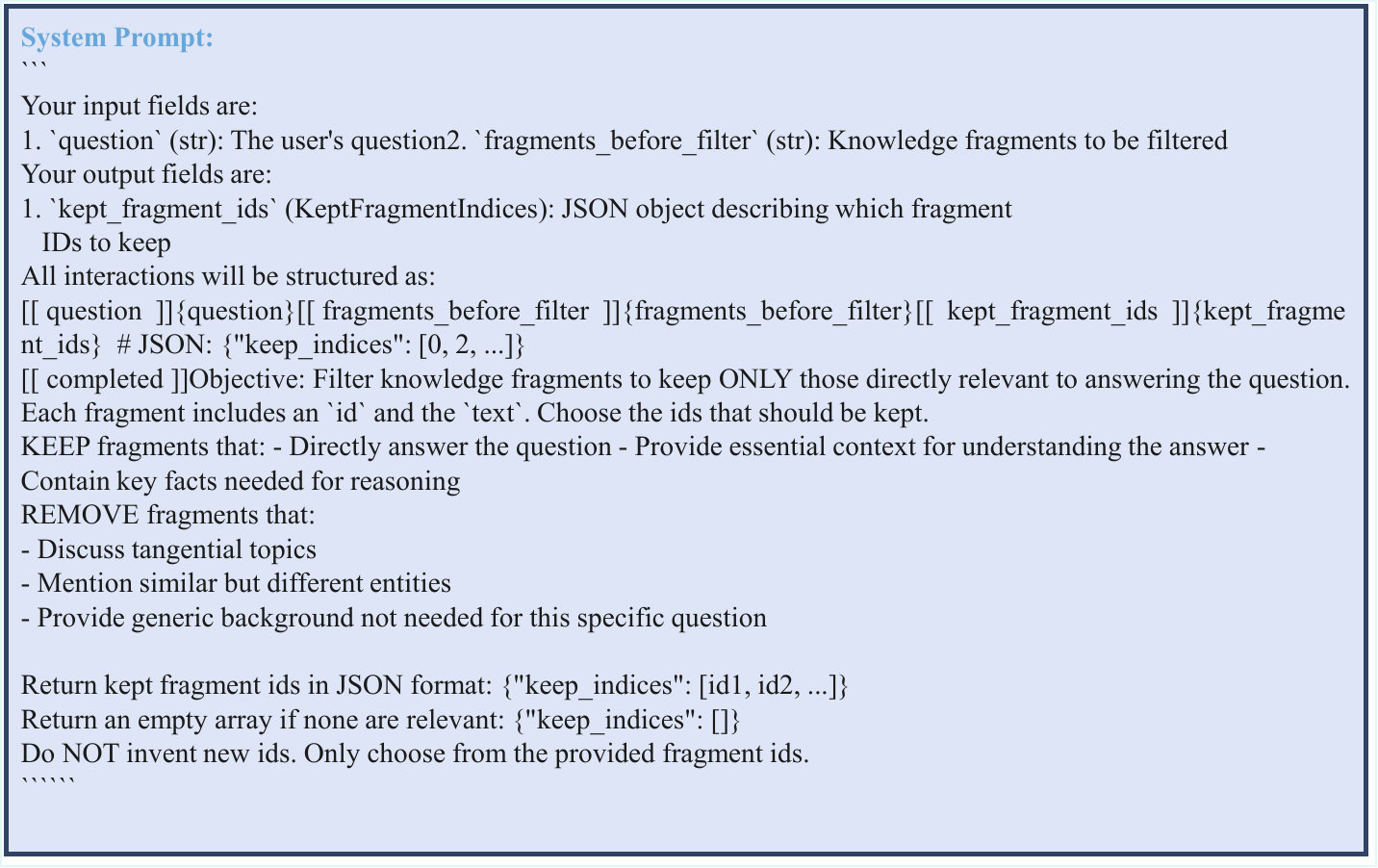}
  \caption{Prompt template for knowledge atom filtering.}
  \label{fig:prompt-filter}
\end{figure}

\begin{figure}[t]
  \centering
  \includegraphics[page=1,width=\linewidth]{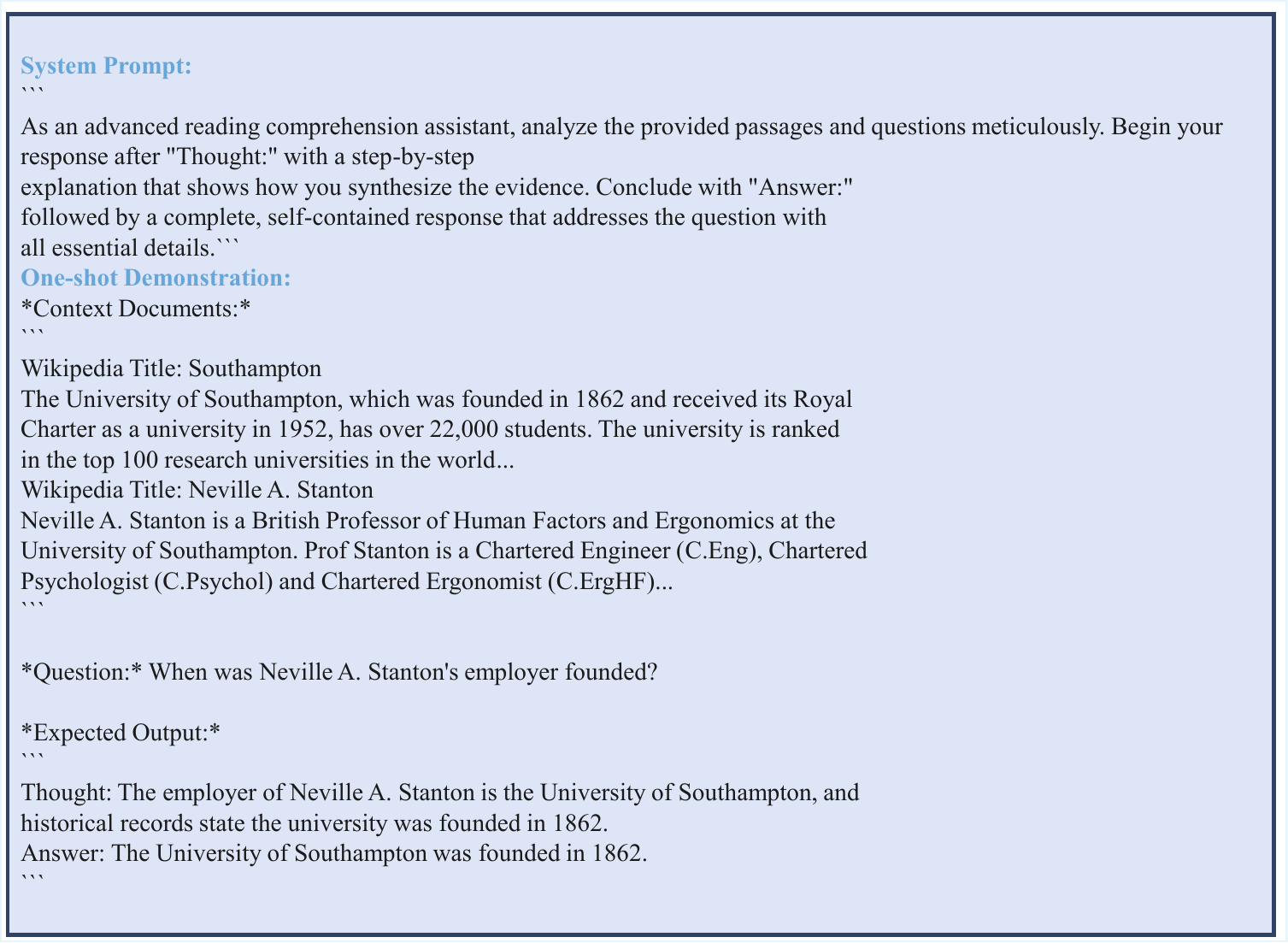}
  \caption{Prompt template for abstract question answering.}
  \label{fig:prompt-qa}
\end{figure}

\begin{figure}[t]
  \centering
  \includegraphics[page=1,width=\linewidth]{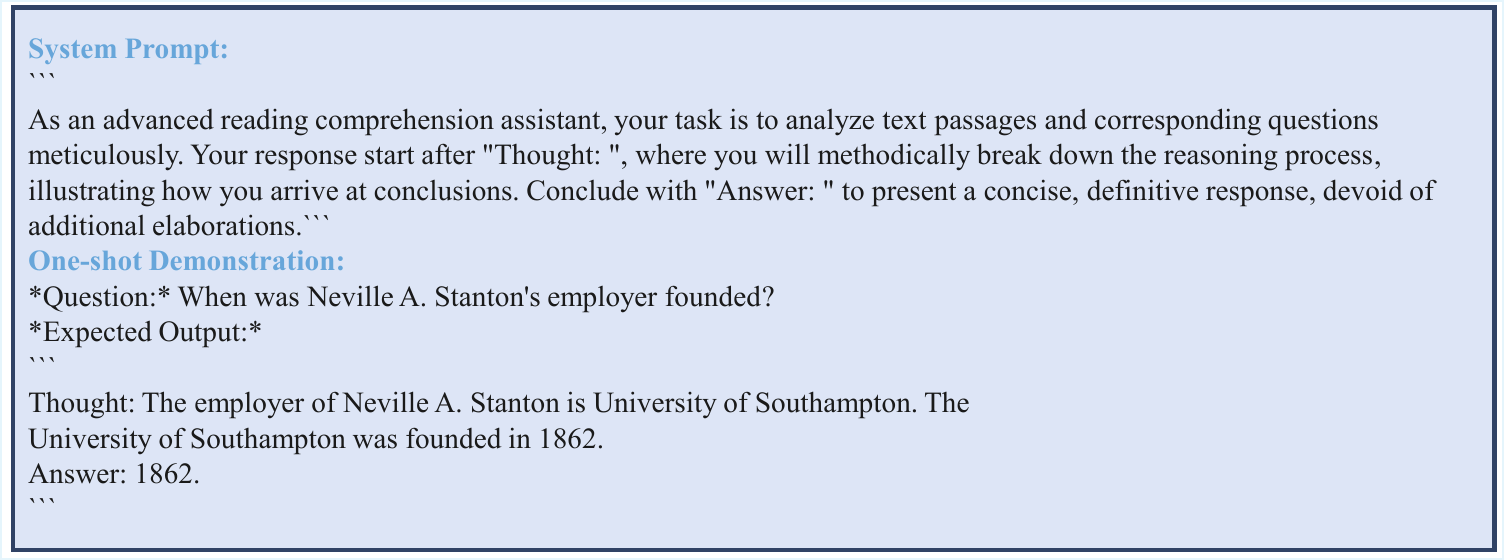}
  \caption{Prompt template for precise question answering.}
  \label{fig:prompt-qa2}
\end{figure}


\end{document}